\theoremstyle{plain}
\newtheorem{theorem}{Theorem}[section]
\newtheorem{lemma}[theorem]{Lemma}
\theoremstyle{definition}
\newtheorem{definition}[theorem]{Definition}
\theoremstyle{remark}
\icmltitlerunning{End-to-End Full-Atom Antibody Design}
\begin{document}

\twocolumn[
\icmltitle{End-to-End Full-Atom Antibody Design}



\icmlsetsymbol{equal}{*}

\begin{icmlauthorlist}
\icmlauthor{Xiangzhe Kong}{thucs,air}
\icmlauthor{Wenbing Huang}{ruc,lab}
\icmlauthor{Yang Liu}{thucs,air}
\end{icmlauthorlist}

\icmlaffiliation{thucs}{Dept. of Comp. Sci. \& Tech., Institute for AI, BNRist Center, Tsinghua University}
\icmlaffiliation{air}{Institute for AI Industry Research (AIR), Tsinghua University}
\icmlaffiliation{ruc}{Gaoling School of Artificial Intelligence, Renmin University of China}
\icmlaffiliation{lab}{Beijing Key Laboratory of Big Data Management and Analysis Methods, Beijing, China}

\icmlcorrespondingauthor{Wenbing Huang}{hwenbing@126.com}
\icmlcorrespondingauthor{Yang Liu}{liuyang2011@tsinghua.edu.cn}

\icmlkeywords{End-to-End, Full-Atom, Antibody Design, E(3)-Equivariance}

\vskip 0.3in
]



\printAffiliationsAndNotice{}  

\begin{abstract}
Antibody design is an essential yet challenging task in various domains like therapeutics and biology. There are two major defects in current learning-based methods: 1) tackling only a certain subtask of the whole antibody design pipeline, making them suboptimal or resource-intensive.  2) omitting either the framework regions or side chains, thus incapable of capturing the full-atom geometry. To address these pitfalls, we propose \textbf{dy}namic \textbf{M}ulti-channel \textbf{E}quivariant gr\textbf{A}ph \textbf{N}etwork (dyMEAN), an end-to-end full-atom model for E$(3)$-equivariant antibody design given the epitope and the incomplete sequence of the antibody. Specifically, we first explore \emph{structural initialization} as a knowledgeable guess of the antibody structure and then propose \emph{shadow paratope} to bridge the epitope-antibody connections. Both 1D sequences and 3D structures are updated via an \emph{adaptive multi-channel equivariant encoder} that is able to process protein residues of variable sizes when considering full atoms. Finally, the updated antibody is docked to the epitope via the alignment of the shadow paratope. Experiments on epitope-binding CDR-H3 design, complex structure prediction,  and affinity optimization demonstrate the superiority of our end-to-end framework and full-atom modeling. 
\end{abstract}

\vspace{-0.2in}
\section{Introduction}
\label{sec:intro}

Antibodies are a family of Y-shaped proteins in immune systems that binds to pathogens, commonly called antigens, with specificity~\citep{raybould2019five}. Antibody design for target epitopes on the antigen exhibits tremendous potential and necessity in therapeutic and biological research~\citep{tiller2015advances, almagro2018progress, yuan2020highly}. Nevertheless, the task is challenging because the \textit{complementarity determining regions (CDRs)}, where the binding mainly occurs, are highly variant, and the underlying regularity of antigen-antibody interactions is arduous to unveil. The past decade has seen the application of traditional energy-based optimization~\citep{li2014optmaven, adolf2018rosettaantibodydesign}, learning-based language models on the 1D sequence~\citep{liu2020antibody, saka2021antibody}, as well as recent deep generative methods to co-design the CDR sequences and 3D structures simultaneously, exhibiting appealing superiority over conventional sequence-based approaches.~\citep{jin2021iterative, luo2022antigen, kong2022conditional}.

\begin{figure}[t]
    \centering
    \includegraphics[width=1.0\columnwidth]{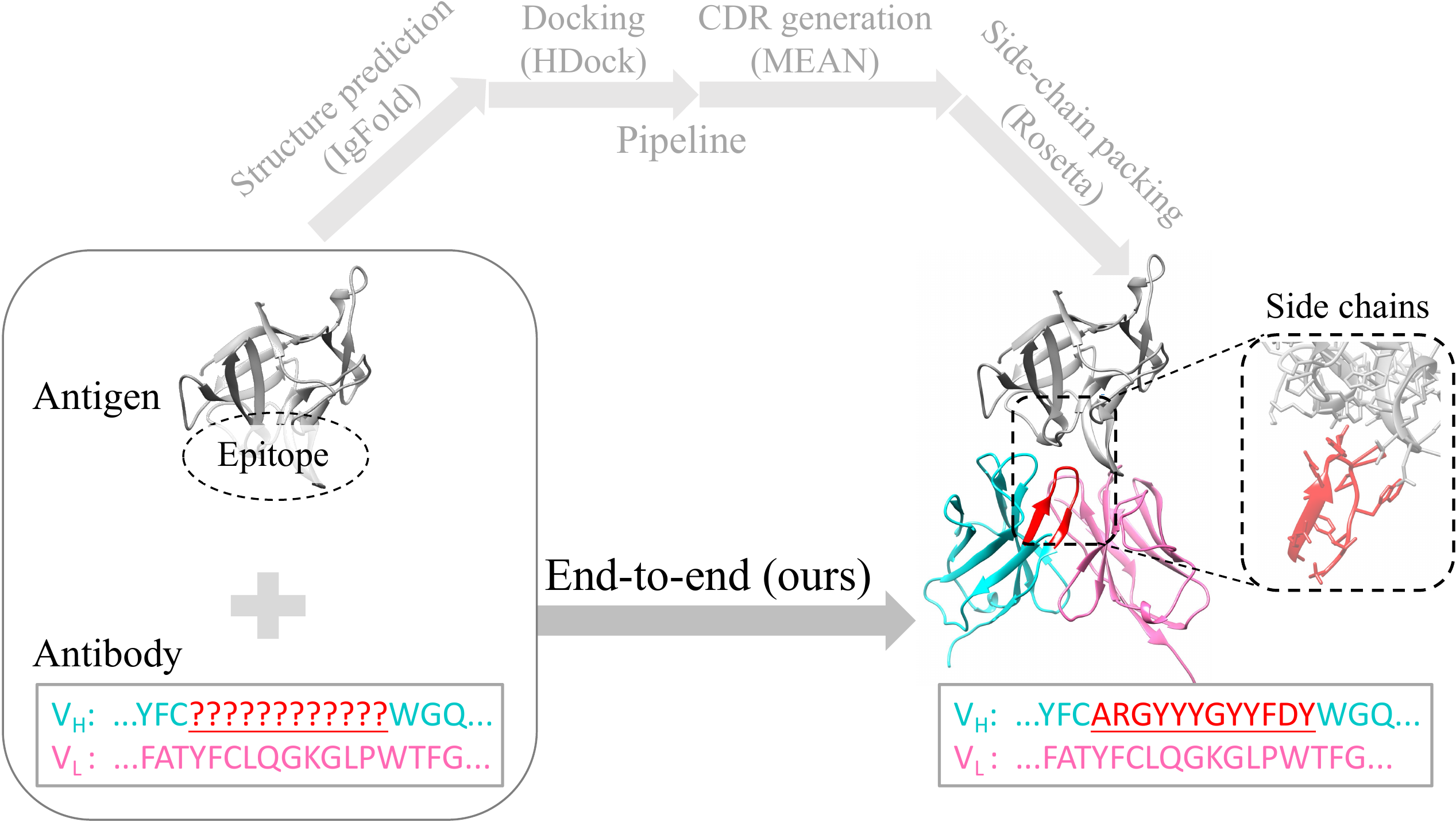}
    \vskip -0.2in
    \caption{Our end-to-end full-atom antibody design. By contrast, current computational methods resort to the multi-stage solution: e.g., IgFold~\citep{ruffolo2022fast} for structure prediction, HDock~\citep{yan2020hdock} for docking on the target epitope, MEAN~\citep{kong2022conditional} for binding CDR generation, and Rosetta~\citep{alford2017rosetta} for side-chain packing.}
    \label{fig:task}
    \vskip -0.2in
\end{figure}

Despite the impressive progress, current computational models are still incapable of fulfilling the real need for antibody design. In most practical cases, we only know the 3D structure of antigen with the target epitope and the 1D incomplete sequence (without CDRs) of antibody. To address this ill-posed task, a potential computational pipeline includes: structure prediction~\citep{ruffolo2022fast}, antigen-antibody docking~\citep{yan2020hdock},  binding CDR generation~\citep{jin2022antibody, luo2022antigen, kong2022conditional}, and side-chain packing~\citep{alford2017rosetta}, as illustrated in Figure~\ref{fig:task}. Existing works can solve each local problem separately, but lacks the mastery of the global picture, making them suboptimal. Conducting wet-lab experiments, such as obtaining the antigen-antibody complex structure from cryo-electron microscopy, somehow bypasses this suboptimality, yet is much more costly and laborious~\citep{carter2006potent}. Therefore, the deficiency of both computational pipelines and experimental methods poses an urgent need for a computational end-to-end solution.


Furthermore, the full-atom geometry is critical for depicting the interactions within the antigen-antibody complex~\citep{foote1992antibody, jones1996principles}. Current works usually model the backbone atoms only~\citep{jin2021iterative, kong2022conditional}, or simply consider the orientation of side chains~\citep{luo2022antigen}. Although \citet{jin2022antibody} makes an initial attempt to incorporate all side-chain atoms into a hierarchical graph, it suffers from efficiency problems and is obliged to omit all other components of the antibody except CDR-H3 (see Appendix~\ref{app:complexity}), leading to incomplete context modeling and thus inaccurate design. Full-atom geometry of the entire antibody has a much larger scale and demands a computationally more efficient and effective model.

To address the above two issues, we propose \textbf{dy}namic \textbf{M}ulti-channel \textbf{E}quivaraint gr\textbf{A}ph \textbf{N}etwork (dyMEAN) as an end-to-end and full-atom solution. Compared to previous works~\citep{luo2022antigen, kong2022conditional}, we directly tackle the end-to-end problem where only the epitope and the incomplete 1D sequence are known in advance (Figure~\ref{fig:task}), in contrast to previous multi-stage solutions. We explore knowledge-guided \emph{structural initialization} based on conserved residues and propose \textit{shadow paratope} to capture antigen-antibody interaction that is invariant to their initial orientations and positions. The 1D sequence and the 3D structure are updated iteratively via an \emph{adaptive multi-channel message passing}, which favorably tolerates the variance in the number of channels (\emph{i.e.}, atoms) in different residues, when considering full-atom geometry. We finally achieve epitope-antibody docking through the alignment of the shadow paratope. The network also conforms to E(3)-equivariance, which is a critical property exhibited in 3D biology~\citep{kong2022conditional}. Experiments on epitope-binding CDR-H3 design, complex structure prediction, and affinity optimization demonstrate the superiority of our end-to-end framework and full-atom modeling.

\vspace{-0.1in}
\section{Related Work}
\label{sec:related_work}
\paragraph{Antibody Design} Conventional computational methods commonly optimize sophisticated energy functions designed by domain experts~\citep{li2014optmaven, adolf2018rosettaantibodydesign}, or train language models on the 1D sequences~\citep{liu2020antibody, saka2021antibody, akbar2022silico}. Energy-based methods suffer from the insufficient expressive power of the statistical energy functions~\citep{mackerell2002charmm, leaver2011rosetta3}, and language models are suboptimal due to the lack of structural modeling.
More recently, the community has witnessed the emergence of sequence-structure co-design methods and their superiority over previous methods~\citep{jin2021iterative, jin2022antibody, luo2022antigen, kong2022conditional}. However, they are limited to certain stages of pipeline-based antibody design. For example, \citet{jin2021iterative} generates the CDRs on a single chain, and \citet{luo2022antigen,kong2022conditional} fill in the CDRs given a docked complex, demanding hard-to-obtain prerequisites. \citet{jin2022antibody} attempts to generate and dock CDR-H3 simultaneously on the local binding interface. Nevertheless, it suffers from the inefficiency of the distance-based initialization, the hierarchical encoding, and the autoregressive refinement (see Appendix~\ref{app:complexity}), preventing it from scaling to the entire antibody. Distinct from the above works, we directly generate the complete complex given the epitope and the incomplete sequence in an end-to-end and full-atom manner.
\vspace{-0.1in}
\paragraph{Protein Docking} Generally, protein docking predicts the docked complex of two proteins given their unbound structures~\citep{kozakov2017cluspro, yan2020hdock, ganea2021independent}. While they require the structure of both proteins in advance, our work simultaneously generates the structure of the antibody and docks it to the antigen. Another difference lies in the prior knowledge of the binding regions on the antigen and the antibody (\emph{i.e.}, the epitope and the paratope). Only certain epitopes on the antigen constitute meaningful targets in therapeutics~\citep{yuan2020highly}, and the paratope mostly comes from CDRs, especially CDR-H3~\citep{kuroda2012computer}. Therefore, antibody docking mainly focuses on the local binding interface, while many protein docking methods (e.g., EquiDock, \citealp{ganea2021independent}) assume no prior knowledge of the epitope and the paratope, making them suboptimal in this situation.
\vspace{-0.1in}
\paragraph{Equivariant Graph Neural Networks}
Equivariant graph neural networks are designed with the desired inductive bias that the results should not rely on the view of observation, namely E(3)-equivariance. With increasing availability of 3D data, abundant equivariant neural networks have emerged~\citep{thomas2018tensor, gasteiger2020directional, fuchs2020se, satorras2021n}. Our work is closely related to the multi-channel equivariant graph networks proposed by \citet{kong2022conditional}, where each residue node has multiple coordinates (i.e., channels) referring to different atoms. We propose a more powerful version of multi-channel equivariant message passing, which is adaptive to the variable number of channels in full-atom modeling.

\vspace{-0.1in}
\section{Notations and Definitions}
\label{sec:def}

\begin{figure}[htbp]
    \centering
    \vskip -0.1in
    \includegraphics[width=1.0\columnwidth]{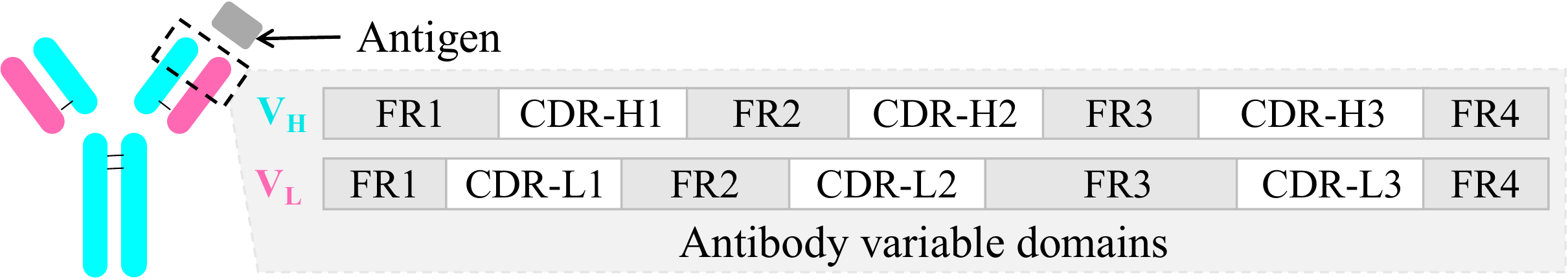}
    \vskip -0.15in
    \caption{Variable domains in the heavy/light chain ($V_H$ / $V_L$).}
    \label{fig:preliminary}
    \vskip -0.1in
\end{figure}

\begin{figure*}[!t]
    \centering
    \vskip -0.1in
    \includegraphics[width=\textwidth]{./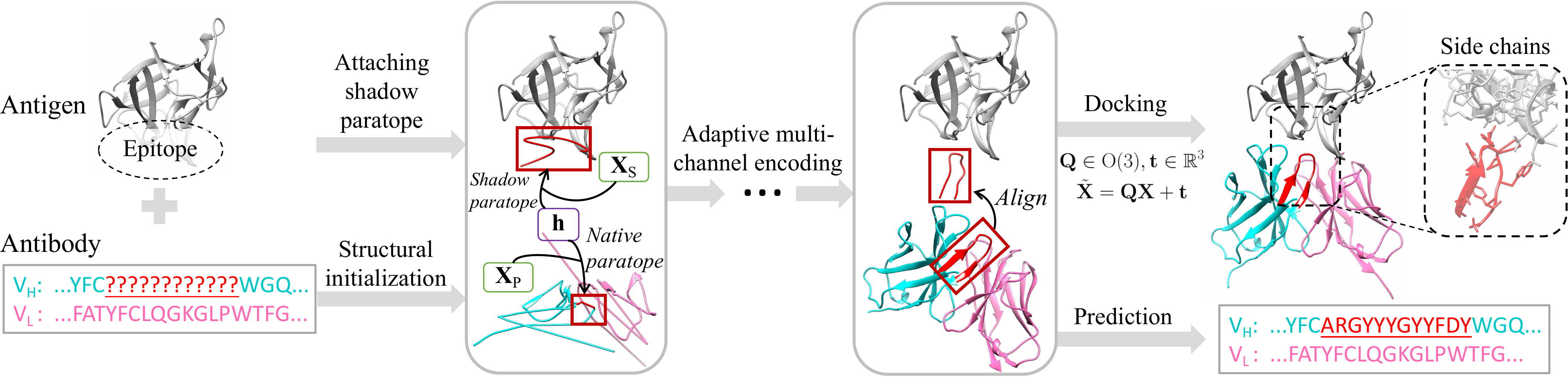}
    \vskip -0.15in
    \caption{Overall architecture. \textbf{Structural initialization} (\textsection~\ref{sec:init}): obtaining the initial hidden vector $\vh_i$ and coordinate matrix $\mX_i$ for the antibody. \textbf{Attaching shadow paratope} (\textsection~\ref{sec:shadow}): attaching a clone of the paratope around the epitope, where $\vh_i$ is shared, but the coordinates are private for the shadow paratope and the native one, \emph{i.e.}, $\mX_S$ v.s. $\mX_P$. \textbf{Adaptive multi-channel encoding} (\textsection~\ref{sec:dyenc}): updating $\vh_i$ and $\mX_i$ by multi-channel message passing, where the full-atom geometry is characterized. \textbf{Docking} (\textsection~\ref{sec:e2e}): Aligning the native paratope to the shadow paratope. \textbf{Prediction} (\textsection~\ref{sec:e2e}): outputting the amino acid type of each residue in the paratope.} 
    \label{fig:model}
    \vskip -0.15in
\end{figure*}

A protein comprises one or more long chains of amino acid residues. An antibody is a Y-shaped symmetric protein with two identical sets of chains, as illustrated in Figure~\ref{fig:preliminary}. Each set contains a heavy chain and a light chain, either of which consists of several constant domains and a variable domain. As their names suggest, the constant domains keep unchanged across different antibodies; while the variable domain varies to enable different binding specificity for different antigens, making it the main focus of antibody design. We denote the variable domains of the heavy chain and the light chain by $V_H$ and $V_L$, respectively. The variable domain is further divided into alternating arrangements of four \textit{framework regions (FRs)} and three \textit{complementarity determining regions (CDRs)}. The binding regions of an antigen and an antibody are called an \textit{epitope} and a \textit{paratope}, separately. In this paper, the paratope refers to CDR-H3 in the heavy chain following~\citet{jin2022antibody}, since it is highly variable and dominates binding~\citep{maccallum1996antibody}.

We describe the epitope of the antigen and the variable domains of the antibody as the graphs $\gG_E(\gV_E,\gE_E)$ and $\gG_A(\gV_A,\gE_A)$, where $\gV_E$ and $\gV_A$ refer to the vertices (\emph{i.e.}, the residues), $\gE_E$ and $\gE_A$ are edges.  Each residue $v_i$ is represented by its amino acid type $s_i$ and a multi-channel 3D coordinate matrix $\mX_i \in \R^{3 \times c_i}$, where $c_i$ denotes the channel size, \emph{i.e.}, the number of atoms in $v_i$. Notably, previous studies~\citep{kong2022conditional} only consider backbone atoms for each residue, and the coordinate dimension is constant: $c_i=4$. This paper models the full-atom geometry by further involving side chains, hence $c_i$ is distinct for different residues.  
The edges are constructed by finding the $k$-Nearest Neighbors (kNN) of each residue, using the minimum pair-wise distance between all atoms in $v_i$ and $v_j$: 
\vskip -0.2 in
\begin{align}
\label{eq:dist-X}
d(v_i, v_j) = \min_{1\leq p \leq c_i, 1 \leq q \leq c_j}||\mX_{i}(:,p) - \mX_{j}(:,q)||_2,
\end{align}
\vskip -0.1 in
where $\mX_{i}(:,p)$ returns the $p$-th atom in $\mX_i$ and $\mX_{i}(:,q)$ is similarly defined.  
Inspired by \citet{kong2022conditional}, we insert three global nodes into the heavy chain, the light chain, and the epitope, respectively, connecting to all nodes in their own chains. Besides, the global nodes of the heavy chain and the light chain are linked to each other. 
\vspace{-0.1in}
\paragraph{Task Definition} The residue vertices of the paratope are denoted as $\gV_P$, clearly $\gV_P\subseteq\gV_A$. Given an epitope $\gG_E(\gV_E, \gE_E)$ and an incomplete antibody sequence $\{ s_i | i\in\gV_A, i \notin \gV_P\}$, we aim to design a model that simultaneously generates the 1D sequence of the paratope as well as the entire 3D structure of the antibody $(\gV_A, \gE_A)$ binding to the epitope, namely $\{ s_i | i \in \gV_P\}$ and $\{\mX_i | v_i \in \gV_A\}$.

\section{Our Method: dyMEAN}
\label{sec:method}

The overall workflow of our dyMEAN is presented in Figure~\ref{fig:model}. During the calculation in dyMEAN, each vertex in epitope graph $\gG_E$, antibody graph $\gG_A$, and paratope subgraph $\gG_P$ ($\gG_P\subseteq\gG_A $) is associated with an invariant vector $\vh_i\in\R^d$ and an equivariant coordinate matrix $\mX_i\in\R^{3\times c_i}$. In form, the \textbf{overview of dyMEAN} is given by:
\begin{align}
    \gG_A &= \text{SI}(\{ s_i\}_{i\in\gV_A, i \notin \gV_P}), i\in\gV_A, \\
    \gG_S &= \text{SP}(\gG_E, \gG_P) \\
    \vh_i, \mX_i &= \text{AME}(\gG_E,\gG_S,\gG_A), i\in\gV_E\cup\gV_S\cup\gV_A,\\
    \label{eq:predict}
    \vp_i &= \text{Predict}(\vh_i), i\in\gV_P,\\
    \label{eq:dock}
    \tilde{\mX}_i & = \text{Dock}(\gG_A,\gG_S), i\in\gV_A,
\end{align}
where, $\text{SI}$ (\emph{a.k.a.} structural initialization) first initializes the coordinates $\mX_i^{(0)}$ and the hidden states $\vh_i^{(0)}$ for the antibody graph  $\gG_A$, based on the the incomplete antibody sequence; with the initialized paratope $\gG_P$, $\text{SP}$ (\emph{a.k.a.} shadow paratope) attaches a shadow paratope $\gG_S$, which shares the hidden states with the native one, to the epitope $\gG_E$, creating a joint graph $\gG_E\cup\gG_S$. SP is crucial in bridging the epitope and the antibody for docking. Then, $\text{AME}$ (\emph{a.k.a.} adaptive multi-channel encoder) iteratively updates $\mX_i$ and $\vh_i$ for all vertices by message passing; finally, Eq.~\ref{eq:predict} predicts the distribution of amino acid types $\vp_i$ for each paratope residue, and Eq.~\ref{eq:dock} docks the antibody $\gG_A$ towards the shadow paratope $\gG_S$, leading to the binding complex structure $\tilde{\mX}_i$.   

We will introduce the functions of $\text{SI}$, $\text{SP}$ and 
$\text{AME}$ in \textsection~\ref{sec:init}, \textsection~\ref{sec:shadow} and \textsection~\ref{sec:dyenc}, respectively. The details of Eq.~\ref{eq:predict}, Eq.~\ref{eq:dock}, and training losses are provided in \textsection~\ref{sec:e2e}. An elegant property of dyMEAN is that its predicted paratope sequence is invariant and the binding structure is equivariant, with respect to the E(3) transformations (rotations/reflections/translations), making it well generalizable to different poses of the target epitope. We will mathematically reveal this point in \textsection~\ref{sec:e2e}.


\subsection{Structural Initialization with Conserved Residues}
\label{sec:init}

The input antibody sequence $\{ s_i\}_{i\in\gV_A, i \notin \gV_P}$ involves neither paratope information nor the 3D geometry. Given such fragmentary information, this subsection investigates on how to attain desirable initialization for both $\vh_i^{(0)}$ and $\mX_i^{(0)}$.

\textbf{Initializing $\vh_i^{(0)}$}~
We derive the initial embedding of each node via its amino acid type $s_i$ and position number $r_i$ in an numbering system ,\emph{e.g.}, IMGT~\citep{lefranc2003imgt}:
$\vh_i^{(0)} = \vf(s_i, r_i) = \vf_{s_i} + \vf_{r_i}$, where $\vf_{s_i}$ and $\vf_{r_i}$ define the learnable amino acid embedding and position embedding, respectively. For the unknown paratope residue, we represent $s_i$ by a special type $\mathrm{[MASK]}$.

\textbf{Initializing $\mX_i^{(0)}$}~
We have the domain knowledge that the FRs of the antibody are well conserved~\citep{klein2013somatic} in  spatial variation. It inspires us to first detect the well-conserved residues in FRs and then apply them to sketch the positions of other residues. While it is challenging to directly locate the conserved residues by comparing the residue-wise coordinates, we resort to the comparison of 1D sequences, which innately reflects the 3D spatial similarity~\citep{jumper2021highly}. To do so, we first align the antibody sequences in the dataset via a certain antibody numbering system (\emph{e.g.}, IMGT). Then we consider a residue as well-conserved if its type is consistent among above $95\%$ of the antibodies the analysis of different thresholds is provided in Appendix~\ref{app:conserve_th}). Next, we align all the antibodies by the backbone (i.e. $N, C_\alpha, C, O$) coordinates of these well-conserved residues via Kabsch algorithm~\citep{kabsch1976solution}, and calculate the average backbone coordinates of these residues, leading to the backbone template $\{\mZ_{r_i} \in \R^{3 \times 4}| r_i \in \sW\}$, where $\sW$ collects the position numbers of the detected well-conserved residues. In our experiments, we identify 16 such residues in the heavy chain and 18 in the light chain. The backbone coordinates $\mZ_i$ of other residues in the same chain are valued in this way: (1) for the ones between two  nearest conserved residues in position number, we linearly interpolate their positions with unified spacing; (2) for those located at both ends of the chain, we conduct outwards linear-interpolation from the nearest conserved residue with the same interval used in the nearest pair of the residues computed in (1).
More details are provided in Appendix~\ref{app:init}.
$\mZ_i$ is then extended to $\mX_i^{(0)}$ by filling $\alpha$-carbon's coordinate in the side chains . We emphasize the significance of this knowledgeable initialization, which provides vague but essential guess of the antibody structure.

The coordinates are further normalized to conform to the standard Gaussian distribution $\gN(0, \mI)$, by conducting 3D mean translation and 1D variance normalization (all dimensions of all antibodies share the same normalization factor to ensure consistent scale). After obtaining $\mX_i^{(0)}$, we construct the kNN edges for $\gG_A$ via the distance defined in Eq.\ref{eq:dist-X}.  


\subsection{E(3)-Invariant Attachment of Shadow Paratope}
\label{sec:shadow}
We attach a clone of the paratope around the epitope, which is called \textit{shadow paratope}. It serves two crucial purposes in our end-to-end framework: (1) Transmitting E(3)-invariant information between the epitope and the antibody, by sharing the hidden states $\vh_i$ and the same topology with the native paratope; (2) acting as the key points that will be used for the docking between the antibody and epitope, which will be detailed in~\textsection~\ref{sec:e2e}. 
One promising property of our shadow paratope attachment is that its 3D coordinates and final docked structure are independent of the initial position of the antibody, since it only exchanges the invariant information (\emph{i.e.}, $\vh_i$ not $\mX_i$) with the native paratope.

The shadow paratope subgraph is $\gG_S = (\gV_S, \gE_S)$. Here, $\gE_S$ contains two parts: internal edges copied from the connections between residues in the native paratope, and external edges linked to the epitope. For $v_i \in \gV_E, v_j \in \gV_S$, the external edges are constructed based on the kNN distance: 
\vskip -0.2 in
\begin{align}
    \label{eq:pdist}
    \hat{d}(v_i, v_j) = \phi_e(\vh_i, \vh_j) + \phi_e(\vh_j, \vh_i),
\end{align}
\vskip -0.1 in
where $\phi_e$ is a Multi-Layer Perceptron (MLP). The hidden vector $\vh_i$ of $\gG_S$ is duplicated from the native paratope, and the coordinates $\mX_i$ are initialized around the center of the epitope according to standard Gaussian $\gN(0, \mI)$\footnote{The coordinates of the epitope have been normalized to $\gN(0, \mI)$ beforehand in a similar way to the antibody in~\textsection~\ref{sec:init}.}. $\gG_S$ is merged into the epitope graph $\gG_E$, creating $\gG_E\cup\gG_S$.

\subsection{Adaptive Multi-Channel Equivariant Encoder}
\label{sec:dyenc}
AME is able to handle $\mX_i$ of different channel size, in order to consider full-atom geometry by involving side chains besides backbone atoms. This is why we call AME adaptive. The $l$-th layer updates the hidden vector $\vh_i$ and the coordinate matrix $\mX_i$ as follows:
\begin{align}
    \label{eq:mp1}
    \vm_{ij} &= \phi_m(\vh_i^{(l)}, \vh_j^{(l)}, \frac{T_R(\mX_i^{(l)}, \mX_j^{(l)})}{||T_R(\mX_i^{(l)}, \mX_j^{(l)})||_F + \epsilon}), \\
    \label{eq:mp2}
    \mX_{ij} &= T_S(\mX_{i}^{(l)} - \frac{1}{c_j} \sum_{k = 1}^{c_j} \mX_j^{(l)}(:, k), \phi_x(\vm_{ij})), \\
    \label{eq:mp3}
    \vh_i^{(l+1)} &= \phi_h(\vh_i^{(l)}, \sum\nolimits_{j \in \gN(i)} \vm_{ij}), \\
    \label{eq:mp4}
    \mX_i^{(l+1)} &= \mX_i^{(l)} + \frac{1}{|\gN(i)|} \sum\nolimits_{j \in \gN(i)} \mX_{ij}
\end{align}
where, $\phi_m, \phi_x, \phi_h$ are MLPs, $\gN(i)$ denotes $i$'s  neighbors, $\vm_{ij}$ and $\mX_{ij}$ are non-geometric and geometric messages, respectively; the geometric relation extractor $T_R$ and geometric message scaler $T_S$ are for processing message between two distinct-shape matrices $\mX_i\in \R^{3\times c_i}$ and $\mX_j\in \R^{3\times c_j}$; the output of $T_R$ is normalized with Frobenius norm following~\citet{huang2022equivariant}, plus a constant $\epsilon=1$ for numerical stability. Below are the details of  $T_R$ and $T_S$, and how information is exchanged between the epitope and the antibody.

\paragraph{Geometric Relation Extractor $T_R$} Given $\mX_i \in \R^{3\times c_i}$ and $\mX_j \in \R^{3\times c_j}$, we first compute the channel-wise distance between each pair of the channels in $\mX_i$ and $\mX_j$: $\mD_{ij}(p, q) = ||\mX_i(:, p) - \mX_j(:, q)||_2$.
Then, we employ two learnable weights $\vw_i \in \R^{c_i \times 1}$ and $\vw_j \in \R^{c_j \times 1}$ to characterize the channel-wise correlation in $\mD_{ij}$, and two learnable attribute matrices $\mA_i\in\R^{c_i\times d}$ and  $\mA_j\in\R^{c_j\times d}$ to extract useful patterns across each channel and output dimension (further details in Appendix~\ref{app:channel_attr}). The final output $\mR_{ij}\in\R^{d\times d}$ is given by:
\begin{align}
    \label{eq:tr}
    \mR_{ij} = \mA_i^\top (\vw_i \vw_j^\top \odot \mD_{ij})\mA_j.
\end{align}
Clearly, $\mR_{ij}$ keeps the same shape regardless of the change in $c_i$ or $c_j$, namely static-dimensional inputs for $\phi_m$ and $\phi_h$.

\paragraph{Geometric Message Scaler $T_S$} 
The main purpose of $T_S$ is to generate geometric messages by scaling the input coordinates $\mX \in \R^{3 \times c}$  with the non-geometric message $\vs=\phi_x(\vm_{ij})\in\R^{C}$ where $C$ is the upper bound of the channel size. In detail, $T_S(\mX, \vs)$ is calculated by:
\begin{align}
    \mX' = \mX \cdot \mathrm{diag}(\vs'),
\end{align}
where $\vs'\in\R^{c}$ is the average pooling of $\vs$ with the window size $C-c+1$ and stride $1$, $\mathrm{diag}(\cdot)$ returns the matrix with the input vector as the diagonal elements, and thus the output $\mX'$ shares the same shape with $\mX$.

\paragraph{Information Exchanging between $\gG_E$ and $\gG_A$}
Although the epitope graph $\gG_E$ and the antibody graph $\gG_A$ are disconnected, their information is exchanged via the hidden states of the shadow paratope $\gG_S$. In particular, we first conduct 1-layer AME on $\gG_A$ and copy the hidden vectors $\vh_i$ from the native paratope $\gG_P$ to the shadow paratope $\gG_S$. Then, we carry out 1-layer AME on  $\gG_E\cup\gG_S$ and copy the hidden vectors reversely from $\gG_S$ to $\gG_P$. The above two stages are alternated until $L$ layers. We additionally run 1-layer message passing on $\gG_A$ to broadcast the updated information across the entire antibody. 

Nicely, $T_R$ is E$(3)$-invariant, $T_S$ is O$(3)$-equivariant, and the information exchanging between $\gG_E$ and $\gG_A$ is E$(3)$-invariant, therefore for the final outputs of AME, $\vh_i$ is E$(3)$-invariant and $\mX_i$ is independently E$(3)$-equivariant~\citep{ganea2021independent} \emph{w.r.t.} $\gG_E\cup\gG_S$ and $\gG_A$. Such property will permit E(3)-invariance of dyMEAN stated in Theorem~\ref{the:e3}.

\subsection{Prediction, Docking and Training Losses}
\label{sec:e2e}

 \paragraph{Prediction} 
With the output by AME, we leverage the progressive full-shot decoding strategy from~\citet{kong2022conditional} to generate the 1D sequence and the 3D structure over $T$ iterations. To be specific, each iteration updates the hidden states and the coordinates for all vertices:
\begin{align}
\label{eq:EMA-t}
\{\vh_i^{(t)},\mX_i^{(t)}\} = \text{AME}(\{\vh_i^{(t-1)},\mX_i^{(t-1)}\}).
\end{align}
We predict the amino acid type of the paratope with $\vh_i^{(t)}$, :
\begin{align}
    \vp^{(t)}_i = \mathrm{Softmax}(\phi_p(\vh^{(t)}_i)),  i \in \gV_P,
\end{align}
where $\phi_p$ is an MLP. The hidden states are refreshed as:
\begin{align}
\label{eq:memory}
\vh^{(t)}_i = \left\{
\resizebox{0.7\columnwidth}{!}{$\displaystyle{
    \begin{array}{ll}
     \vf(s_i, r_i) + \phi_d(\vh_i^{t}), & i\notin\gV_P, \\
     \sum_{j=1}^{n_a} p^{(t)}_{i,j}\vf(s_j, r_i) + \phi_{d}(\vh^{(t)}_i), & i\in\gV_P,
    \end{array}
    }$}
\right.
\end{align}
where the embedding $\vf(s_i, r_i)=\vf_{s_i}+\vf_{r_i}$, $\phi_d$ is an MLP, $n_a$ is the number of amino acid types, and $p^{(t)}_{i,j}$ returns the $j$-th element of $\vp_i^{(t)}$. The second line aims at performing soft smoothing of the embeddings with the predicted probability $\vp_i^{(t)}$. Compared with MEAN~\citep{kong2022conditional}, the memory term $\phi_{d}(\vh^{(t)}_i)$ is extra added for better information reservation, which will be ablated in our experiments. 

The new $\vh^{(t)}_i$, along with $\mX_i^{(t)}$ from Eq.~\ref{eq:EMA-t} will be used as the input for the next iteration. After each iteration, we recreate the edges $\gE_E, \gE_A, \gE_S$ by calculating the distance in Eq.~\ref{eq:dist-X} and Eq.~\ref{eq:pdist} based on the current values $\mX_i^{(t)}$ and $\vh^{(t)}_i$. 


\paragraph{Docking} After the final iteration, we align the pose of the native paratope with the shadow paratope via Kabsch algorithm~\citep{kabsch1976solution}. The docked coordinates $\{\tilde{\mX}_i | v_i \in \gV_A\}$ are given by:
\begin{align}
    \label{eq:dock1}
    &\resizebox{0.85\columnwidth}{!}{$\mQ, \vt = \mathrm{Kabsch}(\{\mX_i^{(T)} | i \in \gV_P \}, \{\mX^{(T)}_i| i \in \gV_S\})$}, \\
    \label{eq:dock2}
    &\tilde{\mX}_i = \mQ \mX^{(T)}_i + \vt, v_i \in \gV_A,
\end{align}
where $\mQ \in \text{O}(3), \vt\in\R^3$, and $\text{O}(3)$ is the orthogonal group. 

\paragraph{Loss Function} The loss function sums up the three parts: sequence loss $\gL_\text{seq}$, structure loss $\gL_\text{struct}$ and docking loss $\gL_\text{dock}$. The cross-entropy loss $\ell_{\text{ce}}$ is utilized to guide the sequence prediction at each iteration:
\begin{align}
    \gL_{\text{seq}} = \frac{1}{T|\gV_P|}\sum\nolimits_{t=1}^T \sum\nolimits_{i \in \gV_P} \ell_{\text{ce}}(\vp^{(t)}_i, \vp^\star_i).
\end{align}
\vskip -0.1in

For structure supervision, we exert Huber loss~\citep{huber1992robust} on the coordinates of the final iteration. As suggested by ~\citet{kong2022conditional}, Huber loss maintains numerical stability for noisy data (further details in Appendix~\ref{app:huber}):
\begin{align}
    \gL_{\text{coord}}  &= \frac{1}{|\gV_A|} \sum\nolimits_{v_i \in \gV_A} \ell_{\text{huber}}(\mX^{(T)}_i, \mX^\star_i),
\end{align}
where $\mX^\star_i$ denotes the ground-truth coordinates aligned to $\mX^{(T)}_i$ by Kabsch algorithm. Since our method generates the structure of all atoms, we further supervise bond lengths to capture the local geometry:
\begin{align}
    \gL_{\text{bond}}   &= \frac{1}{|\gB|} \sum\nolimits_{b \in \gB} \ell_{\text{huber}}(b^{(T)}, b^\star),
\end{align}
where $\gB$ contains all chemical bonds in the antibody, $b^{(T)}$ and $b^\star$ denote the bond length derived from $\mX^{(T)}_i$ and the ground truth, respectively. The structure loss is the sum of the above two losses: $\gL_{\text{struct}} = \gL_{\text{coord}} + \gL_{\text{bond}}$.

For docking, it is sufficient to supervise the shadow paratope by the coordinate loss and the external distance loss:

\vskip -0.2in
\begin{align}
\label{eq:L-cood}
    \gL_{\text{sp}} &= \frac{1}{|\gV_S|} \sum\nolimits_{i \in \gV_S} \ell_\text{huber}({\mX}^{(T)}_i, {\mX}^\star_i), \\
    \label{eq:L-dist}
    \gL_{\text{dist}} &= \frac{1}{T|\gV_E||\gV_S|}\sum_{t=1}^T \displaystyle{\sum_{u\in\gV_E, v\in\gV_S}} \ell_\text{huber}(\hat{d}^{(t)}(u, v), d^\star(u, v)),
\end{align}
\vskip -0.2in

where $\hat{d}^{(t)}$ defined in Eq.~\ref{eq:pdist} computes the external edge distance at $t$-th iteration and $d^\star$ is the ground-truth distance. The docking loss becomes: $\gL_\text{dock} = \gL_\text{sp} + \gL_\text{dist}$.

We now demonstrate an elegant property of our dyMEAN: it is E(3)-equivariant with respect to the initial position and orientation of the epitope. 
\begin{restatable}{theorem}{equivariance}
    \label{the:e3}
    Given the initial epitope (along with the shadow paratope) $\{\vh_i,\mX_i\}_{i\in\gV_E\cup\gV_S}$ and the initialized antibody $\{\vh_i^{(0)},\mX_i^{(0)}\}_{i\in\gV_A}$, we compute the final prediction and docking by $\{\vp_i\}_{i\in\gV_P}, \{\tilde{\mX_i}\}_{i\in\gV_A}=\text{dyMEAN}\left(\{\vh_i,\mX_i\}_{i\in\gV_E\cup\gV_S}, \{\vh_i^{(0)},\mX_i^{(0)}\}_{i\in\gV_A}\right)$. We immediately have the conclusion that $\text{dyMEAN}$ is E(3)-equivariant. Namely, for any transformations $g_1, g_2\in \text{E}(3)$, we have $\{\vp_i\}_{i\in\gV_P}, \{g_1\cdot\tilde{\mX_i}\}_{i\in\gV_A}=\text{dyMEAN}\left(\{\vh_i,g_1\cdot\mX_i\}_{i\in\gV_E\cup\gV_S}, \{\vh_i^{(0)},g_2\cdot\mX_i^{(0)}\}_{i\in\gV_A}\right)$
     where $g \cdot \mX \coloneqq \mQ \mX + \vt$ for orthogonal transformation $\mQ \in \text{O}(3)$ and translation transformation $\vt \in \R^3$.
\end{restatable}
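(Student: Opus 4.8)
The plan is to walk through $\text{dyMEAN}$ stage by stage---the adaptive multi-channel encoder $\text{AME}$ (Eqs.~\ref{eq:mp1}--\ref{eq:mp4}), the $T$ progressive decoding iterations (Eqs.~\ref{eq:EMA-t}--\ref{eq:memory}), the softmax prediction head, and the Kabsch-based docking (Eqs.~\ref{eq:dock1}--\ref{eq:dock2})---carrying two invariants through every stage. First, every hidden vector $\vh_i$ produced along the way, and therefore every predicted distribution $\vp_i^{(t)}$, is unchanged by the pair $(g_1,g_2)$. Second, the coordinates are \emph{independently} E(3)-equivariant in the sense of~\citet{ganea2021independent}: the block $\{\mX_i\}_{i\in\gV_E\cup\gV_S}$ transforms by $g_1$, the block $\{\mX_i\}_{i\in\gV_A}$ transforms by $g_2$, and no coordinate-level operation ever mixes the two blocks. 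Both invariants hold at the input by hypothesis; crucially, the only coupling between the two coordinate blocks is the exchange of hidden states between the native paratope $\gG_P\subseteq\gG_A$ and the shadow paratope $\gG_S$, and by the first invariant this channel transmits only E(3)-invariant data, so consistency is never broken.

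For a single $\text{AME}$ layer, both invariants are preserved. The message $\vm_{ij}$ in Eq.~\ref{eq:mp1} depends on coordinates only through the Frobenius-normalized output of $T_R$, which is E(3)-invariant since by Eq.~\ref{eq:tr} it is assembled from the pairwise channel distances $\mD_{ij}$; hence $\vm_{ij}$ and $\vh_i^{(l+1)}$ of Eq.~\ref{eq:mp3} are invariant. The geometric message $\mX_{ij}$ of Eq.~\ref{eq:mp2} feeds $T_S$ the translation-cancelling difference $\mX_i^{(l)}-\frac{1}{c_j}\sum_k\mX_j^{(l)}(:,k)$, and $T_S$ only right-multiplies by a diagonal matrix built from the invariant $\vm_{ij}$, so $\mX_{ij}$ is O(3)-equivariant with zero translational part; therefore $\mX_i^{(l+1)}$ of Eq.~\ref{eq:mp4} transforms exactly like $\mX_i^{(l)}$. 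Since the $\gG_E$--$\gG_A$ information exchange copies only hidden vectors, it is invariant and keeps the two coordinate frames separate. An induction over the $T$ decoding iterations then extends this: edges are rebuilt via the distance of Eq.~\ref{eq:dist-X} (always between two coordinates in the \emph{same} frame, hence invariant) and via $\hat d$ of Eq.~\ref{eq:pdist} (a function of hidden states only), while the hidden-state refresh of Eq.~\ref{eq:memory} uses only $s_i,r_i,\vp_i^{(t)},\vh_i^{(t)}$, all invariant. Consequently, after iteration $T$: every $\vp_i^{(T)}$ is invariant, so $\{\vp_i\}_{i\in\gV_P}$ is unchanged; and $\mX_i^{(T)}$ has been transformed by $g_1$ for $i\in\gV_S$ and by $g_2$ for $i\in\gV_A$, in particular for $i\in\gV_P$.

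It remains to treat the docking step, which is the crux. Write $g_1\cdot\mX=\mQ_1\mX+\vt_1$ and $g_2\cdot\mX=\mQ_2\mX+\vt_2$, and let $(\mQ^\star,\vt^\star)=\mathrm{Kabsch}(\{\mX_i^{(T)}\}_{i\in\gV_P},\{\mX_i^{(T)}\}_{i\in\gV_S})$ be the solution in the untransformed run, under the natural paratope--shadow correspondence. I will prove the elementary lemma that $\mathrm{Kabsch}(\{\mQ_2\mX_i^{(T)}+\vt_2\}_{i\in\gV_P},\{\mQ_1\mX_i^{(T)}+\vt_1\}_{i\in\gV_S})=(\mQ_1\mQ^\star\mQ_2^\top,\ \vt')$ with $\vt'=\mQ_1\vt^\star+\vt_1-\mQ_1\mQ^\star\mQ_2^\top\vt_2$, by substituting this candidate $(\mQ',\vt')$ into the least-squares objective $\sum_{i}\|\mQ'(\mQ_2\mX_i^{(T)}+\vt_2)+\vt'-(\mQ_1\mX_i^{(T)}+\vt_1)\|^2$ and observing, after cancelling $\mQ_2^\top\mQ_2=\mI$, that each summand equals $\|\mQ_1(\mQ^\star\mX_i^{(T)}+\vt^\star-\mX_i^{(T)})\|^2=\|\mQ^\star\mX_i^{(T)}+\vt^\star-\mX_i^{(T)}\|^2$; hence the candidate attains the same optimal value as the untransformed problem and is therefore a minimizer. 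Plugging $(\mQ',\vt')$ into Eq.~\ref{eq:dock2} applied to the $g_2$-transformed antibody gives, for every $v_i\in\gV_A$, $\tilde\mX_i'=\mQ'(\mQ_2\mX_i^{(T)}+\vt_2)+\vt'=\mQ_1(\mQ^\star\mX_i^{(T)}+\vt^\star)+\vt_1=g_1\cdot\tilde\mX_i$. Together with the invariance of $\{\vp_i\}_{i\in\gV_P}$, this is exactly the asserted E(3)-equivariance.

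The main obstacle is the Kabsch lemma. One must argue uniqueness of the minimizer so that the constructed candidate is genuinely \emph{the} returned value---this holds in general position, and in degenerate configurations it holds modulo the intrinsic rotational/reflective symmetry of the alignment problem, which is harmless because any such symmetry acts trivially on the already-superposed coordinates---and one must keep the two frames straight: Kabsch aligns the native paratope, living in the $g_2$ frame, onto the shadow paratope, living in the $g_1$ frame, so the composite map $\mQ'$ must pre-cancel $\mQ_2$ and post-apply $\mQ_1$, which is precisely why the docked output lands in the $g_1$ frame rather than the $g_2$ frame. A convenient simplification is that Eq.~\ref{eq:dock1} already returns $\mQ\in\text{O}(3)$, so no determinant correction is needed and the substitution argument covers reflections verbatim; and the fact that $g_1$ need not preserve the zero-mean/unit-variance normalization of the epitope input is irrelevant, since the theorem asserts equivariance of $\text{dyMEAN}$ only as a function of the coordinates it receives.
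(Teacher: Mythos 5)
Your proposal is correct and mirrors the paper's own proof: the same decomposition into E(3)-invariance of $T_R$ (via the pairwise channel distances), O(3)-equivariance of $T_S$, independent E(3)-equivariance of AME across the two disconnected coordinate blocks coupled only through invariant hidden states, and equivariance of the Kabsch docking, assembled by induction over the decoding iterations. The only substantive difference is that your docking step is argued more rigorously than the paper's Lemma on Dock --- you substitute the explicit candidate $(\mQ_1\mQ^\star\mQ_2^\top,\,\vt')$ into the least-squares objective and verify it attains the untransformed optimum, and you flag the minimizer-uniqueness caveat, whereas the paper asserts the conjugation $g'=g_1\cdot g\cdot g_2^{-1}$ by an informal ``undo the transformations, apply $g$, reapply $g_1$'' argument.
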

The proof is provided in Appendix~\ref{app:e3}. This theorem is crucial, as it tells that our dyMEAN is well generalizable to arbitrary orientation and position of the epitope as well as the initialized antibody, and it is thus data-efficient.

\section{Experiments}
\label{sec:exps}

We conduct experiments on the three tasks: (1) Epitope-binding CDR-H3 generation (\textsection~\ref{sec:cdrh3}); (2) Complex structure prediction (\textsection~\ref{sec:cplx}); (3) Affinity optimization (\textsection~\ref{sec:affopt}). We also try designing binders on general proteins and provide the results in Appendix~\ref{app:cath}. Following \citet{kong2022conditional}, we extract the 48 residues closest to the antibody as the epitope, which is sufficient to include all binding residues in the antigen~\citep{shan2022deep}.

Since there is no previous method for end-to-end full-atom antibody design, we implement each subtask of the whole pipeline (structure prediction$\Rightarrow$docking$\Rightarrow$CDR generation$\Rightarrow$side-chain packing) with existing competitive approaches. For antibody structure prediction, we select the official implement of \textbf{IgFold}~\citep{ruffolo2022fast} that is a specialization of 
AlphaFold~\citep{evans2022protein} for the antibody domain. For docking, we leverage \textbf{HDock}~\citep{yan2020hdock}, which is a prevailing model with knowledge-based scoring functions. For CDR generation, the following baselines are implemented: \textbf{RosettaAb}~\citep{adolf2018rosettaantibodydesign} searches for the optimal sequence and structure guided by statistical energy functions; \textbf{MEAN}~\citep{kong2022conditional} generates both 1D sequences and 3D structures with equivariant attention graph networks; \textbf{Diffab}~\citep{luo2022antigen} is a diffusion-based generative model and has considered side-chain orientations. To further involve side chains, we use Rosetta~\citep{alford2017rosetta} to cope with side-chain packing, which is also a built-in step of RosettaAb. Besides, we implement \textbf{HERN}~\citep{jin2022antibody} that needs no external structure prediction, docking, and side-chain packing but is unaware of framework region modeling and inefficient in autoregressive generation of all atoms. 
Further implementation details are deferred to Appendix~\ref{app:exp}.


\subsection{Epitope-binding CDR-H3 Generation}
\label{sec:cdrh3}
The experiments here test the central goal of end-to-end antibody design, as illustrated in Figure~\ref{fig:task}. 
As CDR-H3 is the most variant region among all CDRs and largely determines the binding specificity and affinity~\citep{raybould2019five}, we consider it as the paratope to be generated. We also provide the analysis for designing multiple CDRs as well as the entire antibody in \textsection~\ref{sec:designall}.

We use the following metrics for quantitative assessment: \textbf{Amino Acid Recovery (AAR)} is defined as the overlapping ratio of the generated sequence and the ground truth; \textbf{CAAR}~\citep{ramaraj2012antigen} computes AAR restricted to binding residues whose minimum distance from epitope residues is below 6.6 \mbox{\normalfont\AA}; \textbf{TMscore}~\citep{zhang2004scoring, xu2010significant} measures the global similarity between the generated structure and the ground truth in terms of $C_\alpha$ coordinates; \textbf{Local Distance Difference Test (lDDT)}~\citep{mariani2013lddt} contrasts the difference of the atom-wise distance matrix between the generated structure and the ground truth; \textbf{RMSD} calculates the Root Mean Square Deviation regarding the absolute coordinates of CDR-H3 without Kabsch alignment; \textbf{DockQ}~\citep{basu2016dockq} is a comprehensive score for the docking quality. Both TMscore and lDDT range from 0 to 1 and are invariant to E(3)-transformations of the antibody structure, while RMSD and DockQ focus on the docking quality and are sensitive to the relative position of the antibody to the epitope.

We train all models on the Structural Antibody Database (SAbDab,~\citealp{dunbar2014sabdab}) retrieved in November 2022, and assess them with the RAbD benchmark~\citep{adolf2018rosettaantibodydesign} composed of 60 diverse complexes selected by domain experts. We split SAbDab into the training and validation sets with a ratio of $9:1$ according to CDR-H3 clusters as suggested by \citet{jin2021iterative, kong2022conditional}. Each cluster is formed by antibodies sharing above 40\% CDR-H3 sequence identity calculated by the BLOSUM62 substitution matrix~\citep{henikoff1992amino}. The antibodies in the same clusters as the test set are dropped to maintain a convincing generalization test. We implement the clustering process with MMseqs2~\citep{steinegger2017mmseqs2} and the numbers of antibodies (clusters) in the training and the validation sets are 3,256 (1,644) and 365 (182).
\vspace{-0.25in}
\paragraph{Results} As shown in Table~\ref{tab:cdr_gen}, our dyMEAN remarkably outperforms all baselines regarding nearly all metrics, supporting its superiority in recovering 1D sequences, 3D structures, and the binding interface.
In contrast to the pipeline-based models (RosettaAb$^\ast$, DiffAb$^\ast$, and MEAN$^\ast$), dyMEAN is end-to-end and able to alleviate potential accumulated errors incurred by each stage of the antibody design process, hence leading to better performance. Compared with HERN which is unaware of frame region modeling, dyMEAN is clearly more advantageous in both 1D generation and docking, indicating that characterizing the full-context geometry in antibody design is useful and even indispensable. In addition, the TMscore and lDDT of the initialized structure via SI are meaningful but still far from satisfactory, which explains the importance of later message passing by AME in dyMEAN. As an illustrated example, Figure~\ref{fig:cdrh3} visualizes the comparison between MEAN$^\ast$ and dyMEAN. More samples are provided in Appendix~\ref{app:sample}. We further analyze the distribution of the $\chi$-angles of the generated side chains in Appendix~\ref{app:xangles}.

\begin{table}[t!]
\centering
\vskip -0.1in
\caption{Results of epitope-binding CDR-H3 design on RAbD. Methods with superscript $\ast$ adopt the pipeline: IgFold $\Rightarrow$ HDock $\Rightarrow$ CDR generation $\Rightarrow$ Rosetta side-chain packing.}
\label{tab:cdr_gen}
\setlength\tabcolsep{5pt}
\scalebox{0.70}{
\begin{tabular}{cccccccc}
\toprule
\multirow{2}{*}{Model}                             & \multicolumn{3}{c}{Generation}                       &  & \multicolumn{3}{c}{Docking}              \\ \cline{2-4} \cline{6-8} 
                                                   & AAR$\uparrow$    & TMscore$\uparrow$& lDDT$\uparrow$ &  & CAAR$\uparrow$& RMSD$\downarrow$& DockQ$\uparrow$\\ \midrule
 RosettaAb$^*$    & 32.31\%          & 0.9717          & 0.8272          &  & 14.58\% & 17.70         & 0.137          \\
DiffAb$^*$ & 35.31\%          & 0.9695          & 0.8281          &  & 22.17\% & 23.24         & 0.158          \\
MEAN$^*$   & 37.38\%          & 0.9688          & 0.8252          &  & 24.11\% & 17.30         & 0.162          \\
HERN                                               & 32.65\%          & -               & -               &  & 19.27\% & \hphantom{0}9.15   & 0.294          \\ \hline
Initialization                               & -                & 0.5072          & 0.2998          &  & -                & -             & -              \\
dyMEAN                                       & \textbf{43.65\%} & \textbf{0.9726} & \textbf{0.8454} &  & \textbf{28.11\%} & \textbf{\hphantom{0}8.11} & \textbf{0.409} \\ \bottomrule
\end{tabular}
}
\vskip -0.1in
\end{table}

\subsection{Complex Structure Prediction}
\begin{figure}[t]
    \centering
    \includegraphics[width=.9\columnwidth]{./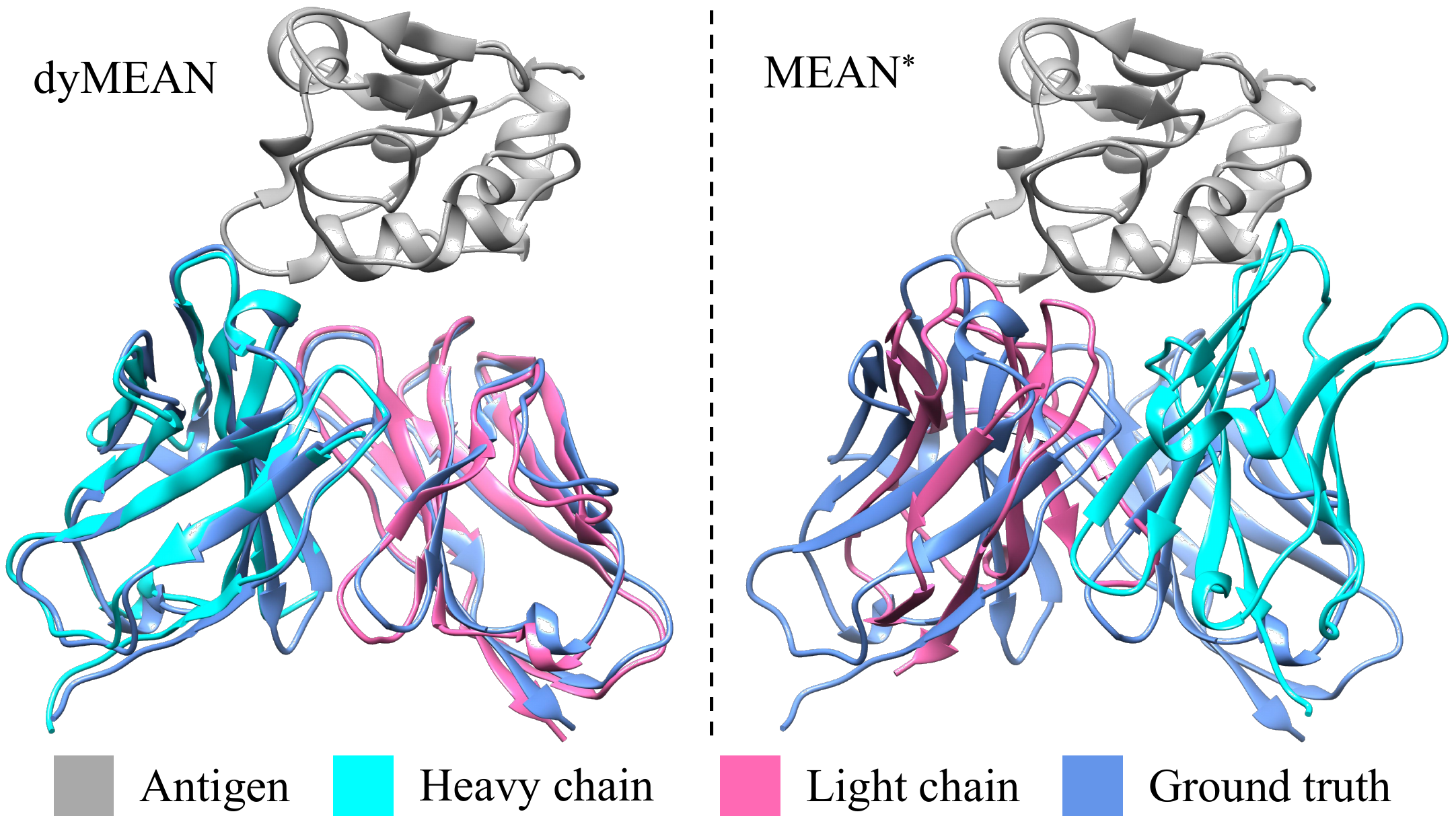}
    \vskip -0.15in
    \caption{Complexes (pdb: 1ic7) generated by our dyMEAN (DockQ$=0.971$) and MEAN$^\ast$ (DockQ$=0.046$).}
    \label{fig:cdrh3}
    \vskip -0.1in
\end{figure}
\label{sec:cplx}
This task predicts the docked complex structure given the complete antibody sequence (including CDR-H3). We report the metrics of TMscore, lDDT, RMSD, and DockQ. As there is no need for CDR generation, the pipeline-based method is reduced as: IgFold$\Rightarrow$Hdock$\Rightarrow$ Rosetta. To better depict the effectiveness of our method, we also implement the docking version of HERN in two considerate ways: (1)  taking input as the backbone structure predicted by IgFold, HERN outputs the docked backbones, followed by Rosetta for side-chain packing; (2) taking input as the ground-truth antibody structures, HERN docks CDR-H3 along with other regions towards the epitope. We train all models on SAbDab with the training-validation ratio of 9:1 and evaluate on the test set (51 antigen-antibody complexes) used in IgFold paper~\citep{ruffolo2022fast} to avoid any potential data leakage when applying IgFold during testing.
\vspace{-0.1in}
\paragraph{Results} Table~\ref{tab:cplx_pred} reads that dyMEAN surpasses all other methods in terms of both structure prediction and docking. Excitingly, though IgFold leverages embeddings from a pretrained antibody language model~\citep{ruffolo2021deciphering} and utilizes 38k additional antibody structures from AlphaFold~\citep{jumper2021highly} for training, our model still achieves better TMscore and lDDT, exhibiting its stronger capability of learning the distribution of antibody structures. 
As for the baseline GT$\Rightarrow$HERN that applies ground-truth structures for docking, our dyMEAN still yields better docking accuracy, which reveals that dyMEAN really excels at unveiling the epitope-antibody interactions with the full-context geometry. We also explore including other CDRs into the shadow paratope which is presented in Appendix~\ref{app:mulcdr_dock}.d

\begin{table}[htbp]
\centering
\vskip -0.15in
\caption{Complex Structure Prediction. Methods with superscript $\ast$ use Rosetta to generate the side chains. The values with $\dag$ are the upper-bound as they are calculated on Ground Truths (GT).}
\label{tab:cplx_pred}
\scalebox{0.8}{
\begin{tabular}{cccccc}
\toprule
\multirow{2}{*}{Model}          & \multicolumn{2}{c}{Structure}     &           & \multicolumn{2}{c}{Docking}    \\ \cline{2-3} \cline{5-6} 
                                & TMscore$\uparrow$& lDDT$\uparrow$&           & RMSD$\downarrow$& DockQ$\uparrow$ \\ \midrule
IgFold$\Rightarrow$HDock$^\ast$ & 0.9701\hphantom{0}& 0.8439\hphantom{0}          &           & 16.32         & 0.202          \\
IgFold$\Rightarrow$HERN$^\ast$  & 0.9702\hphantom{0}& 0.8441\hphantom{0}          &           & \hphantom{0}9.63          & 0.429          \\
GT$\Rightarrow$HERN             & 1.0000$^\dag$   & 1.0000$^\dag$   &           & \hphantom{0}9.65          & 0.432          \\ \hline
initialization            & 0.5054\hphantom{0}    & 0.3006\hphantom{0}          & \textbf{} & -             & -              \\
dyMEAN                    & \textbf{0.9731\hphantom{0}} & \textbf{0.8673\hphantom{0}} & \textbf{} & \textbf{\hphantom{0}9.05} & \textbf{0.452} \\ \bottomrule
\end{tabular}
}
\vskip -0.1in
\end{table}

\subsection{Affinity Optimization}
\label{sec:affopt}
Another common application is to optimize the affinity of a given antibody. As suggested by \citet{kong2022conditional}, we use the binding affinity change ($\Delta\Delta G$) as the objective, which is predicted by a GNN-based predictor~\citep{shan2022deep}. We also provide the results using FoldX~\citep{schymkowitz2005foldx} as the affinity predictor in Appendix~\ref{app:aff_foldx}. We conduct evaluation on the antibodies from SKEMPI V2.0~\citep{jankauskaite2019skempi}. We also report the number of changed residues $\Delta L$ since many practical scenarios prefer smaller $\Delta L$~\citep{ren2022proximal}. To adjust dyMEAN for this task, we additionally train an MLP over the representations of the complex graphs to fit the above-mentioned $\Delta\Delta G$ predictor. Then we conduct gradient search to locate favorable initial states of all residues, which are likely to generate a complex of higher affinity. A few more adaptions are needed, which are detailed in Appendix~\ref{app:prop_opt}. For compared baselines, we use ITA for MEAN, and the intermediate state at the $(T-t)$-th step during the denoising process for DiffAb, as suggested in their papers. All models are trained on SAbDab under the same settings as~\textsection~\ref{sec:cdrh3}. For each antibody in the test set, we generate 100 candidates and record the $\Delta\Delta G$ of the top-1 candidate, and then compute the corresponding $\Delta L$. 
\vspace{-0.2in}
\paragraph{Results} Table~\ref{tab:aff_opt} summarizes the average $\Delta\Delta G$ and $\Delta L$ over all test antibodies. It shows that dyMEAN generates antibodies with the lowest $\Delta \Delta G$ and controllable changes of $\Delta L$. 
Although DiffAb can also control $\Delta L$ by reducing $t$, its ability to affinity optimization is limited. MEAN achieves favorable $\Delta\Delta G$ but at the cost of great change in $\Delta L$. It is worth mentioning that our model still achieves desirable performance even when only 1 or 2 residues are allowed to change. Figure~\ref{fig:opt} (right) illustrates one example in this case. 

\begin{table}[t!]
\centering
\vskip -0.1in
\caption{Average $\Delta \Delta G$ (kcal/mol) and average number of changed residues ($\Delta L$). dyMEAN-$n$ denotes the restricted version allowing at most $n$ changed residues, and dyMEAN itself changes $n$ residues with $n$ is sampled from $[1,N]$ at each generation.}
\label{tab:aff_opt}
\setlength\tabcolsep{3pt}
\scalebox{0.8}{\begin{tabular}{ccc}
\toprule
Method       & $\Delta \Delta G\downarrow$& $\Delta L\downarrow$ \\ \midrule
Diffab ($t=1$)\hphantom{0}  & -0.32             & 1.19       \\
Diffab ($t=2$)\hphantom{0}  & -0.68             & 1.21       \\
Diffab ($t=4$)\hphantom{0}  & -1.00             & 1.38       \\
Diffab ($t=8$)\hphantom{0}  & -1.34             & 1.62       \\
Diffab ($t=16$)             & -1.85             & 3.54       \\
Diffab ($t=32$)             & -2.17             & 7.06       \\ \bottomrule
\end{tabular}}
\hspace{1em}
\scalebox{0.8}{\begin{tabular}{ccc}
\toprule
Method       & $\Delta \Delta G\downarrow$ & $\Delta L\downarrow$ \\ \midrule
MEAN         & -6.48             & 8.96       \\
dyMEAN-$1$     & -6.79             & 1.00       \\
dyMEAN-$2$     & -7.11             & 1.59       \\
dyMEAN-$4$     & -7.18             & 3.24       \\
dyMEAN-$8$     & -7.23             & 6.67       \\ 
dyMEAN\hphantom{-0} & \textbf{-7.31}    & 5.57       \\ \bottomrule
\end{tabular}}
\vskip -0.15in
\end{table}

\begin{figure}[t]
    \centering
    \includegraphics[width=\columnwidth]{./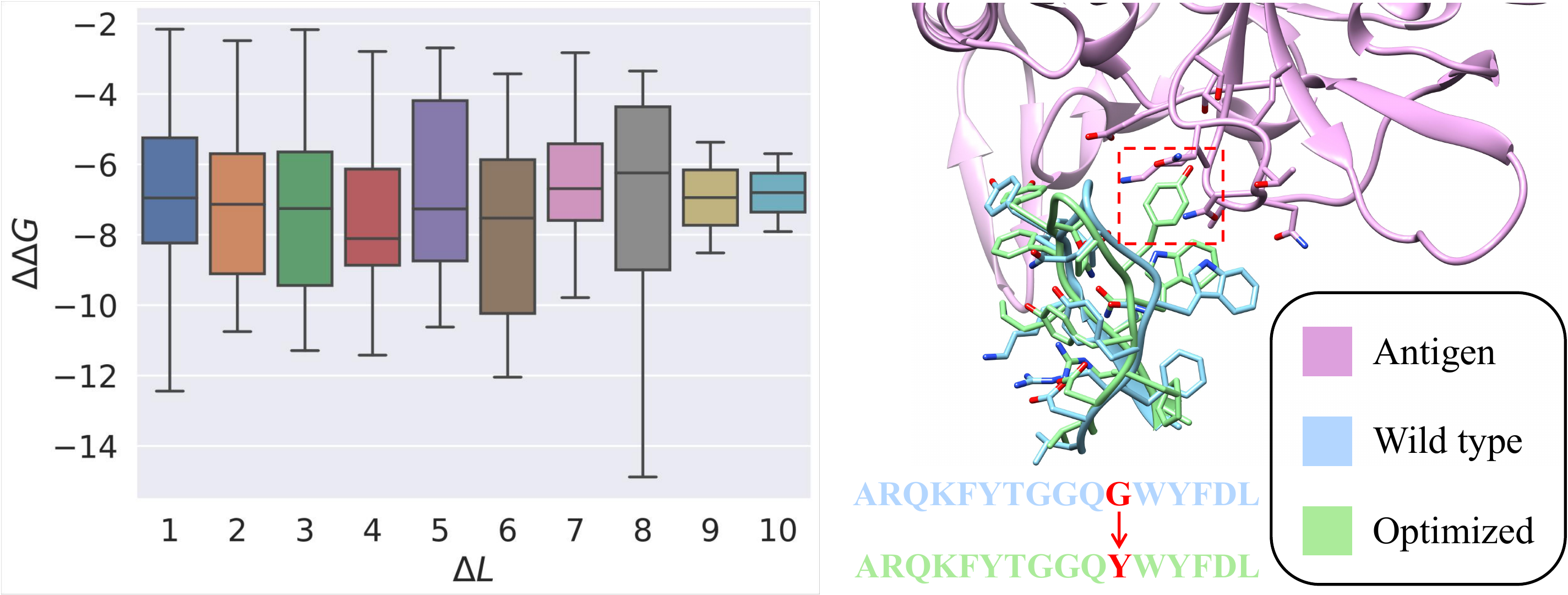}
    \vskip -0.15in
    \caption{Left: The distribution of $\Delta\Delta G$ w.r.t. the actual $\Delta L$ of the candidates after optimization. Right: The binding interface of an optimized antibody (pdb: 3se9, $\Delta\Delta G=-7.22$) with only one residue changed compared to the wild type.}
    \label{fig:opt}
    \vskip -0.2in
\end{figure}

\section{Analysis}
\label{sec:analysis}
\paragraph{Ablation Study} We ablate the necessity or value of the following components: the number of iterations in generation $T$, the full-atom geometry, the information sharing between the shadow and the native paratope, the learnable channel weights $\vw_i\vw_j^\top$ in Eq.~\ref{eq:tr}, the memory term $\phi_d$ in Eq.~\ref{eq:memory}, and the external distance prediction loss $\gL_\text{dist}$ in Eq.~\ref{eq:L-dist}. Particularly, for the ablation of the full-atom geometry, we only retain backbone atoms in dyMEAN and use Rosetta for side-chain packing afterward; for $\gL_\text{dist}$, we instead use the coordinates to compute distances between residue pairs other than predicting them with hidden states. 
Table~\ref{tab:ablation} presents the following observations: (1)
The value of $T$ mainly affects the docking performance, and $T=3$ used in dyMEAN generally yields the best performance. (2) The removal of the full-atom geometry exerts a remarkably adverse impact on the overall performance, which confirms the necessity of incorporating the side-chain conformation. (3) The information sharing is critical specifically for structure generation and docking, without which all metrics excluding CAAR drop by a large margin. (4) The learnable weights act like attentions to different channels, and will incur detriment if removed. (5) The memory mechanism contributes to the task of CDR-H3 design but seems nonessential on complex structure prediction, which is reasonable because the information passed for the hidden states of CDR-H3 sequence in Eq.~\ref{eq:memory} is closely influenced by this term while the 3D coordinates are directly passed on to the next iteration.
(6) The external distance prediction loss $\gL_\text{dist}$ is vital for docking, which we suspect the coordinates alone are insufficient to correctly recover the structure of the shadow paratope specifically during early iterations. 


\begin{table}[t!]
\vskip -0.1in
\centering
\caption{Ablations of different components in dyMEAN.}
\label{tab:ablation}
\scalebox{0.67}{
\begin{tabular}{cccccccc}
\toprule
\multirow{2}{*}{Model} & \multicolumn{3}{c}{Generation}                       &           & \multicolumn{3}{c}{Docking}                       \\ \cline{2-4} \cline{6-8} 
                       & AAR$\uparrow$    & TMscore$\uparrow$& LDDT$\uparrow$&           & CAAR$\uparrow$& RMSD$\downarrow$& DockQ$\uparrow$\\ \hline
\rowcolor{black!5}\multicolumn{8}{c}{CDR-H3 Design}                                                                                                             \\ \hline
dyMEAN                 & \textbf{43.65\%} & 0.9726          & \textbf{0.8454} &           & 28.11\%          & \textbf{\hphantom{0}8.11} & \textbf{0.409} \\
$T=2$                  & 43.57\%          & \textbf{0.9731} & 0.8411          &           & \textbf{29.23\%} & \hphantom{0}9.68       & 0.383          \\
$T=4$                  & 42.84\%          & 0.9725          & 0.8440          &           & 28.18\%          & \hphantom{0}8.65  & 0.393          \\
- full-atom            & 41.81\%          & 0.9730          & 0.7999          &           & 27.96\%          & 10.10         & 0.343          \\
- sharing\hphantom{-0} & 43.17\%          & 0.9718          & 0.8374          &           & 28.79\%          & \hphantom{0}9.46  & 0.356          \\
- $\vw_i \vw_j^\top$\hphantom{00}   & 39.29\%          & 0.9724          & 0.8408          &           & 25.87\%          & \hphantom{0}8.60  & 0.407          \\
- memory\hphantom{0}               & 40.01\%          & 0.9727          & 0.8444          &           & 24.37\%          & \hphantom{0}9.03  & 0.378          \\
- $\gL_\text{dist}$\hphantom{0000}    & 42.32\%          & 0.9715          & 0.8361          &           & 27.46\%          & \hphantom{0}9.07  & 0.393          \\ \hline
\rowcolor{black!5}\multicolumn{8}{c}{Complex Structure Prediction}                                                                                              \\ \hline
dyMEAN                 & -                & \textbf{0.9731} & \textbf{0.8673} &           & -                & \hphantom{0}9.05  & \textbf{0.452} \\
$T=2$                  & -                & 0.9716          & 0.8606          &           & -                & \hphantom{0}9.61  & 0.440          \\
$T=4$                  & -                & 0.9712          & 0.8628          &           & -                & \hphantom{0}9.62  & 0.441          \\
- full-atom            & -                & 0.9713          & 0.8111          &           & -                & \hphantom{0}9.98  & 0.424          \\
- sharing\hphantom{-0} & -                & 0.9709          & 0.8641          &           & -                & 11.27             & 0.429          \\
- $\vw_i \vw_j^\top$\hphantom{00}   & -                & 0.9725          & 0.8662          &           & -                & \hphantom{0}9.16  & 0.432          \\
- memory\hphantom{0}               & -                & 0.9711          & 0.8653          &           & -                & \textbf{\hphantom{0}8.89} & 0.447          \\
- $\gL_\text{dist}$\hphantom{0000}    & -                & 0.9706          & 0.8587          &           & -                & \hphantom{0}9.97  & 0.416          \\ \bottomrule
\end{tabular}}
\vskip -0.1in
\end{table}
\vspace{-0.1in}
\paragraph{Multiple CDRs Design and Full Antibody Design}
\label{sec:designall}
In \textsection~\ref{sec:cdrh3}, we follow previous settings (e.g. HERN) and focuses mainly on the design of CDR-H3, because CDR-H3 is the loop mostly involved in binding and the most difficult to model. However, our method can be easily extended to include other CDRs or any target regions. What we need to do is just masking all of them to generate, while the overall generation architecture keeps the same. To illustrate this flexibility, we additionally extend our model to the simultaneous design of all 6 CDRs and report the results in Table~\ref{tab:designall}. It suggests that the results are promising in general.

Further, we provide the results for designing the full antibody including the framework regions in Table~\ref{tab:designall}. It reads that AAR improves by a large margin, which is expected because the other parts excluding CDR-H3 in an antibody exhibit stronger regularities and conservativeness. The side-chain generation (lDDT) worsens, which is also expected because it is more challenging to simultaneously generate the type of the residue as well as its side-chain geometry than generating the side chains with known residue type in the framework regions. The performance on backbone generation (TMscore) and docking (DockQ) remain similar to the CDR-H3 design experiment (\textsection~\ref{sec:cdrh3}).

\begin{table}[htbp]
\centering
\vskip -0.1in
\caption{Evaluation on designing all 6 CDRs simultaneously and designing the full antibody.}
\label{tab:designall}
\scalebox{0.9}{
\begin{tabular}{ccccc}
\toprule
\multicolumn{2}{c}{AAR} & TMscore              & lDDT   & DockQ   \\ \hline
\rowcolor{black!5}\multicolumn{5}{c}{Simultaneous Design of 6 CDRs}\\ \hline
\multicolumn{5}{c}{AAR details}                                   \\\hline
L1        & 75.55\%     &                      & H1     & 75.72\% \\
L2        & 83.10\%     &                      & H2     & 68.48\% \\
L3        & 52.12\%     & \multicolumn{1}{l}{} & H3     & 37.51\% \\ \hline
All       & 60.07\%     & 0.9653               & 0.8029 & 0.396   \\ \hline
\rowcolor{black!5}\multicolumn{5}{c}{Design of Full Antibody}     \\ \hline
Full      & 74.96\%     & 0.9662               & 0.7589 & 0.412   \\ \bottomrule
\end{tabular}}
\vskip -0.1in
\end{table}

\section{Limitations}

\paragraph{Data Diversity and Evaluation Metrics}
Currently, deep generative models are likely to face difficulties in antibody design due to the limited diversity of antigen-antibody data. We count the most frequent unigram pattern of the amino acid types of each position in CDR-H3 from the training set, matching from both sides to the middle, which yields the pattern $\mathrm{ARDG***DY}$ where most $\mathrm{*}$ are $\mathrm{Y}$. We use this unigram pattern to calculate AAR on the test set and obtain AAR$=39.61\%$ and CAAR$=26.57\%$. This implies that the meaningless unigram pattern is prevailing in both the training set and the test set, which may hinder the models from learning meaningful antigen-antibody interaction patterns and trick the evaluation metrics. After removing the first 4 residues and the last 2 residues from CDR-H3, dyMEAN achieves an AAR of $31.76\%$, which exhibits a clear performance detriment compared to Table~\ref{tab:cdr_gen}. These phenomena encourage future work in augmenting the dataset (\emph{e.g.} through wet-lab experiments or extracting similar interfaces from general protein complexes) and proposing better evaluation metrics to avoid the impact of the unigram distribution (\emph{e.g.} drop the residues that can be predicted correctly by unigram pattern).
\vspace{-0.2in}
\paragraph{Reliability of Computational Energy Functions}
Ultimately the binding affinity  (or binding energy) determines whether the generated candidates are good binders or not. In this paper, we use the deep-learning-based predictor of $\Delta\Delta G$, and it is also common to use statistical energy terms (\emph{e.g.} FoldX~\citep{schymkowitz2005foldx}, Rosetta~\citep{alford2017rosetta}, docking scores in softwares~\citep{goodsell1996automated}). However, the reliability of current computational energy functions still remains uncertain, and some are known to correlate poorly with the experimental results~\citep{ramirez2016reliable, ramirez2018reliable}. Indeed, there are two questions required to answer: (1) If these energy functions which are fitted on well binding complexes can distinguish poorly binding complexes? (2) If these energy functions which are fitted on natural complexes can generalize to complexes generated by deep models which may yield distinct distributions? We believe a well-generalizable affinity predictor is essential for learning-based antibody design; otherwise, wet-lab evaluations are necessary, which, yet, are inefficient and labor-intensive.

\section{Conclusion}
\label{sec:conclusion}
In this paper, we propose dyMEAN, a full-atom model for end-to-end antibody design given the epitope and the incomplete antibody sequence. Specifically, we explore a knowledge-guided structural initialization and propose shadow paratope for E(3)-equivariant message passing and docking. The proposed adaptive multi-channel encoder also tackles the challenge of the variant number of atoms in different residues in full-atom modeling. Our dyMEAN surpasses state-of-the-art models in terms of epitope-binding CDR-H3 design, complex structure prediction, and affinity optimization. Our work provides insights into the end-to-end antibody design and could inspire future research on the full-atom modeling of proteins.

\vspace{-0.1in}
\section*{Reproducibility}
Codes for our dyMEAN are available at \url{https://github.com/THUNLP-MT/dyMEAN}.

\vspace{-0.1in}
\section*{Acknowledgments}
This work is jointly supported by the Vanke Special Fund for Public Health and Health Discipline Development of Tsinghua University, the National Natural Science Foundation of China (No. 61925601, No. 62006137), Guoqiang Research Institute General Project of Tsinghua University (No. 2021GQG1012), Beijing Academy of Artificial Intelligence, Beijing Outstanding Young Scientist Program (No. BJJWZYJH012019100020098), Scientific Research Fund Project of Renmin University of China (Start-up Fund Project for New Teachers).

\bibliography{references}
\bibliographystyle{icml2023}

\newpage
\appendix
\onecolumn
\setcounter{theorem}{0}
\setcounter{equation}{0}

\section{Details of the Structural Initialization}
\label{app:init}

Given an antibody, we denote the position number of the $i$-th residue as $r_i$. With the backbone template $\{\mZ_{r} \in \R^{3 \times 4}| r \in \sW\}$ (\textsection~\ref{sec:init}) of the same numbering system, we initialize the backbone coordinates $\mZ_{r_i}$ of the structure by linearly interpolating the residues between the well-conserved ones and extending the residues at two ends outwards. The above process can be formalized as follows:
\begin{spacing}{1.5}
\vskip -0.2in
\begin{equation}
\vcenter{\hbox{\begin{minipage}{.3\textwidth}
\centering
\includegraphics[width=\textwidth]{./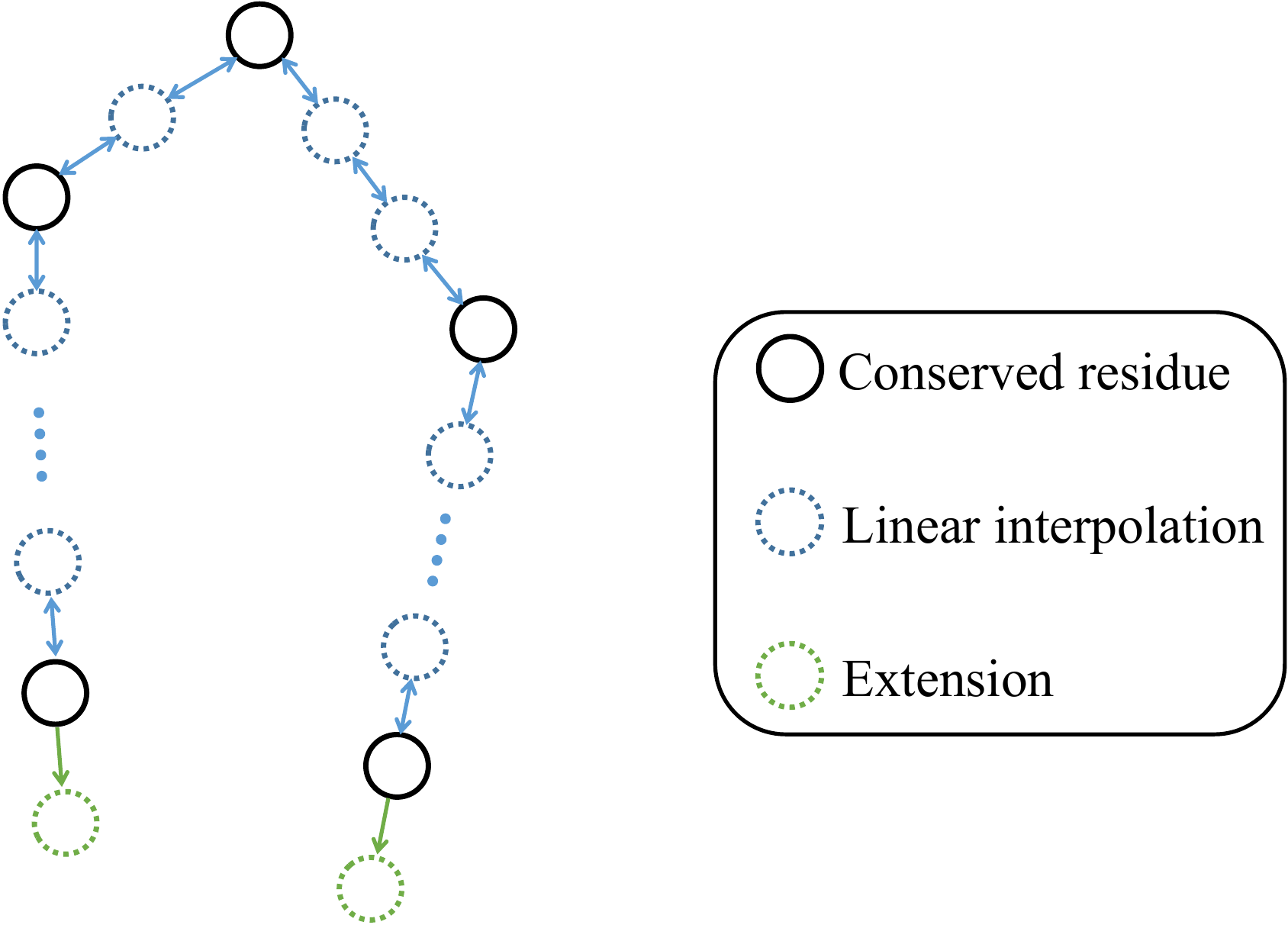}
\end{minipage}}}
\qquad\qquad
\begin{aligned}
    \mZ_{r_i} = \left\{
    \begin{array}{ll}
        \mZ_{r_i}, & r_i \in \sW, \\
        \textcolor[rgb]{0.25, 0.44, 0.61}{\frac{1}{q-p}[(i - p) \mZ_{r_q} + (q - i) \mZ_{r_p}],} & \textcolor[rgb]{0.25, 0.44, 0.61}{r_i \notin \sW, \exists p, q},\\
        \textcolor[rgb]{0.44, 0.68, 0.28}{\mZ_{r_p} + (i - p) (\mZ_{r_p} - \mX^{bb}_{p-1}),} & \textcolor[rgb]{0.44, 0.68, 0.28}{r_i \notin \sW, \exists p, \nexists q}, \\
        \textcolor[rgb]{0.44, 0.68, 0.28}{\mZ_{r_q} + (q - i) (\mZ_{r_q} - \mX^{bb}_{q+1}),} & \textcolor[rgb]{0.44, 0.68, 0.28}{r_i \notin \sW, \nexists p, \exists q}, 
    \end{array}
    \right.
\end{aligned}
\end{equation}
\vskip -0.2in
\end{spacing}
where $p$ and $q$ are defined as the indexes of the nearest conserved residues to the $i$-th residue when $r_i \notin \sP$:
\begin{align}
    p = \max_k \{k | k < i, r_k \in \sW\}, \\
    q = \min_k \{k | k > i, r_k \in \sW\},
\end{align}
$\mZ_{r_i}$ is then extended to $\mX_i^{(0)}$ by filling in the coordinates of the side-chain atoms with the coordinate of $\alpha$-carbon.

\section{Channel Attributes and Weights}
\label{app:channel_attr}
Given a multi-channel coordinate $\mX \in \R^{3\times c}$, we assign it with a $d$-dimensional attribute matrix $\mA \in \R^{c \times d}$ and a set of weights $\vw \in \R^{c \times 1}$. Each row vector of $\mA$ is associated with the atom type and the atom position of the corresponding channel and the weights are decided by the residue type of the coordinate. Natural amino acids only incorporate four atom types (i.e. $C, N, O, S$), and each atom will be assigned a position code indicating the number of chemical bonds on the shortest path from it to the $C_\alpha$~\citep{iupac1970abbreviations}. For example, Figure~\ref{fig:atompos} illustrates the position code of each atom in the side chain of Tryptophan. The attribute vector for each channel is the sum of its atom type embedding and its position code embedding. For the unknown residues, we assign them with a maximum number of atom channels, where each channel is filled with a $\mathrm{[MASK]}$ atom type and a $\mathrm{[MASK]}$ atom position.
\begin{figure}[htbp]
    \centering
    \includegraphics[width=.2\textwidth]{./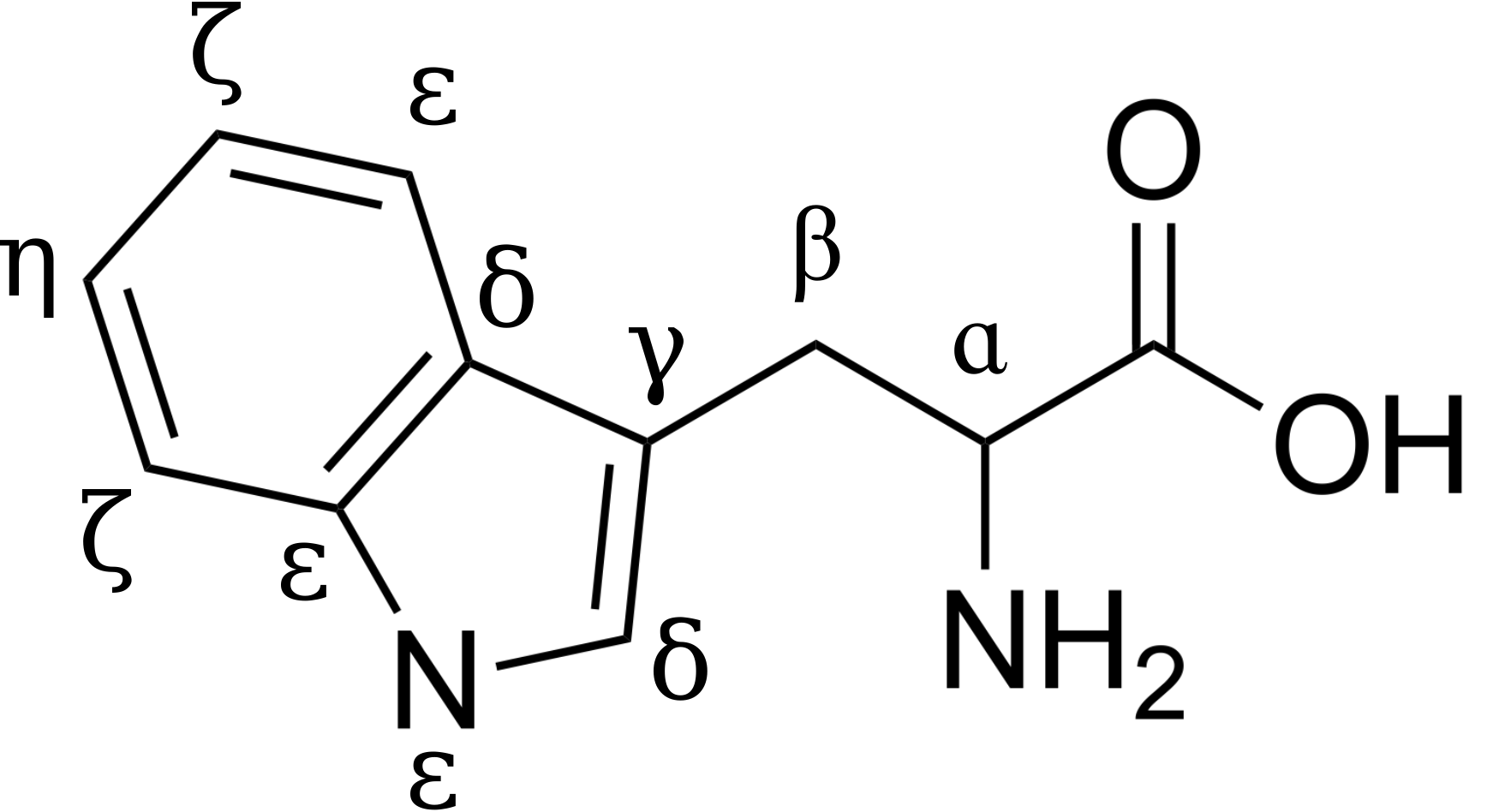}
    \caption{Position codes for atoms in the side-chain of Tryptophan.}
    \label{fig:atompos}
    \vskip -0.1in
\end{figure}

Both attribute vectors and weights are learnable parameters in our model. For efficiency consideration, the value of $d$ should be small because the dimension of the geometric relations $\mR_{ij} \in \R^{d\times d}$ (\textsection~\ref{sec:dyenc}) is quadratic of $d$. In practice, we find that $d = 16$ is sufficient. Furthermore, we normalize the weight $\vw$ with its L2-norm to avoid potential numerical instability.

\section{Proof of Theorem~\ref{the:e3}}
\label{app:e3}
\equivariance*

We start by giving the definition of E$(3)$-invariance and E$(3)$-equivariance~\citep{huang2022equivariant} as follows:
\begin{definition}[E$(3)$-equivariance]
    A function $\phi: \sX \mapsto \sY$ is E$(3)$-equivariant if $\forall g \in \text{E}(3)$ we have $\rho_\sY(g) \vy = \phi(\rho_\sX(g) \vx)$, where $\rho_\sX$ and $\rho_\sY$ instantiate $g$ in the input space $\sX$ and the output space $\sY$. Specifically, $\phi$ is E$(3)$-invariant if $\rho_\sY(g) \equiv I$, where $I$ is the identity transformation.
\end{definition}

Before proceeding to the overall proof, we need to first present and prove several necessary lemmas below.

\begin{lemma}
    \label{lem:tr}
    For the geometric relation extractor $T_R$ (\textsection~\ref{sec:dyenc}), $\forall \mX_i \in \R^{3 \times c_i}, \mX_j \in \R^{3 \times c_j}$, suppose $\mR_{ij} = T_R\left(\mX_i, \mX_j\right)$, then $T_R$ is E(3)-invariant. Namely, $\forall g \in \text{E}(3)$, we have $\mR_{ij} = T_R\left(g \cdot \mX_i, g \cdot \mX_j\right)$, where $g\cdot \mX \coloneqq \mQ \mX + \vt$, $\mQ \in \text{O}(3), \vt \in \R^3$.
\end{lemma}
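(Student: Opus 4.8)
The plan is to prove Lemma~\ref{lem:tr} by tracking how each operation inside $T_R$ responds to the group action $g\cdot\mX = \mQ\mX + \vt$ with $\mQ\in\mathrm{O}(3)$, $\vt\in\R^3$. The key observation is that $T_R$ depends on its coordinate inputs only through the pairwise channel distance matrix $\mD_{ij}$, where $\mD_{ij}(p,q) = \|\mX_i(:,p) - \mX_j(:,q)\|_2$, and everything downstream of $\mD_{ij}$ — the fixed learnable objects $\vw_i,\vw_j,\mA_i,\mA_j$ and the bilinear form $\mR_{ij} = \mA_i^\top(\vw_i\vw_j^\top \odot \mD_{ij})\mA_j$ — does not touch the coordinates at all. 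So it suffices to show $\mD_{ij}$ is itself E$(3)$-invariant; then $\mR_{ij}$ is a fixed function of an invariant quantity, hence invariant.

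First I would compute, for arbitrary $p,q$, the transformed difference: under $\mX_i \mapsto \mQ\mX_i + \vt\mathbf{1}^\top$ and $\mX_j\mapsto\mQ\mX_j+\vt\mathbf{1}^\top$ (the translation applied columnwise), we get $(\mQ\mX_i(:,p)+\vt) - (\mQ\mX_j(:,q)+\vt) = \mQ(\mX_i(:,p) - \mX_j(:,q))$, so the translation cancels. Then since $\mQ$ is orthogonal it preserves the Euclidean norm: $\|\mQ(\mX_i(:,p)-\mX_j(:,q))\|_2 = \|\mX_i(:,p)-\mX_j(:,q)\|_2$. Hence $\mD_{ij}$ computed on $(g\cdot\mX_i, g\cdot\mX_j)$ equals $\mD_{ij}$ computed on $(\mX_i,\mX_j)$, entrywise. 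Substituting this into Eq.~\ref{eq:tr} and noting $\vw_i,\vw_j,\mA_i,\mA_j$ are parameters independent of the group element, I conclude $T_R(g\cdot\mX_i, g\cdot\mX_j) = \mA_i^\top(\vw_i\vw_j^\top\odot\mD_{ij})\mA_j = T_R(\mX_i,\mX_j) = \mR_{ij}$, which is exactly the E$(3)$-invariance claim in the chosen definition (output representation $\rho_\sY\equiv I$).

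There is essentially no hard part here — the lemma is a direct consequence of the two elementary facts that (i) differences of points are translation-invariant and (ii) orthogonal maps are isometries. The only thing requiring a little care is bookkeeping about how $\vt$ acts on a $3\times c$ matrix (it is broadcast across all $c$ columns), so that the cancellation in step one is stated correctly; I would make this explicit with the $\mathbf{1}^\top$ notation or by working column-by-column. This lemma will then feed into the later lemmas on $T_S$ (O$(3)$-equivariance) and the overall E$(3)$-equivariance of dyMEAN in Theorem~\ref{the:e3}, where the combination of an invariant scalar channel ($\vh_i$) and an equivariant coordinate channel ($\mX_i$) must be propagated through the message-passing updates Eq.~\ref{eq:mp1}–Eq.~\ref{eq:mp4}.
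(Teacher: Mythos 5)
Your proposal is correct and follows essentially the same route as the paper's proof: show the pairwise channel distance matrix $\mD_{ij}$ is unchanged because the translation cancels in the column differences and the orthogonal matrix preserves norms, then note that $\vw_i,\vw_j,\mA_i,\mA_j$ are untouched by the group action so $\mR_{ij}$ in Eq.~\ref{eq:tr} is invariant. Your explicit remark about $\vt$ being broadcast across all $c$ columns is a minor bookkeeping clarification the paper leaves implicit, but the substance is identical.
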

\begin{proof}
$\forall \mX_i \in \R^{3 \times c_i}, \mX_j \in \R^{3 \times c_j}$, $\mR_{ij}$ is obtained through Eq.~\ref{eq:tr}. Consider the pair-wise channel distance matrix $\mD_{ij}$, $\forall \mQ \in \text{O}(3), \vt \in \R^3$, we have:
\begin{align*}
    \mD_{ij}(p, q) &= ||(\mQ\mX_i(:, p) + \vt) - (\mQ\mX_j(:, q) + \vt)|| = ||\mQ(\mX_i(:, p) - \mX_j(:, q))||, \\
                   &= \sqrt{[\mQ(\mX_i(:, p) - \mX_j(:, q))]^\top[\mQ(\mX_i(:, p) - \mX_j(:, q))]}, \\
                   &= \sqrt{(\mX_i(:, p) - \mX_j(:, q))^\top\mQ^\top\mQ(\mX_i(:, p) - \mX_j(:, q))}, \\
                   &= \sqrt{(\mX_i(:, p) - \mX_j(:, q))^\top(\mX_i(:, p) - \mX_j(:, q))}, \\
                   &= ||\mX_i(:, p) - \mX_j(:, q)||,
\end{align*}
With $\mA_i, \mA_j, \vw_i, \vw_j$ not affected by transformations on $\mX_i$ and $\mX_j$, we can directly derive $\mR_{ij} = T_R\left(\mQ\mX_i + \vt, \mQ\mX_j + \vt\right) = T_R\left(\mX_i, \mX_j\right)$, which concludes Lemma~\ref{lem:tr}.
\end{proof}

\begin{lemma}
    \label{lem:ts}
    For the geometric message scaler $T_S$ (\textsection~\ref{sec:dyenc}), $\forall \mX \in \R^{3 \times c}, \forall \vs \in \R^{C}$, suppose $\mX' = T_S\left(\mX, \vs\right)$, then $T_S$ is $\text{O}(3)$-equivariant. Namely, $\forall \mQ \in \text{O}(3)$, we have $\mQ \mX' = T_S\left(\mQ \mX, \vs\right)$.
\end{lemma}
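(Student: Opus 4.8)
The plan is to verify the identity $\mQ\mX' = T_S(\mQ\mX, \vs)$ directly by expanding both sides, exploiting two elementary facts: the scaling vector $\vs'$ is a function of $\vs$ alone, and matrix multiplication is associative.

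First I would record the key structural observation about $T_S$. By its definition in \textsection~\ref{sec:dyenc}, $T_S(\mX, \vs) = \mX\cdot\mathrm{diag}(\vs')$, where $\vs'\in\R^{c}$ is the average pooling of $\vs$ with window size $C-c+1$ and stride $1$. The pooling takes $\vs$ as its sole argument and never reads $\mX$; consequently, for any fixed $\vs$, the vector $\vs'$ — and hence the diagonal matrix $\mathrm{diag}(\vs')$ — is entirely independent of $\mX$ and therefore unaffected by any transformation applied to $\mX$.

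Next, fixing an arbitrary $\mQ\in\text{O}(3)$, I would expand the right-hand side as $T_S(\mQ\mX, \vs) = (\mQ\mX)\cdot\mathrm{diag}(\vs')$, where the pooled vector is the \emph{same} $\vs'$ appearing in $\mX' = \mX\cdot\mathrm{diag}(\vs')$, precisely because $\vs$ is held fixed and only $\mX$ is transformed. By associativity of matrix multiplication, $(\mQ\mX)\cdot\mathrm{diag}(\vs') = \mQ\,(\mX\cdot\mathrm{diag}(\vs')) = \mQ\mX'$, which is exactly the left-hand side. This yields $\mQ\mX' = T_S(\mQ\mX, \vs)$ and closes the argument.

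There is no genuine obstacle here: once the $\mX$-independence of $\vs'$ is noted, the claim collapses to associativity. The only subtlety worth flagging is why the lemma asserts O(3)- rather than full E(3)-equivariance. Since $T_S$ acts on $\mX$ by right-multiplication by the diagonal matrix $\mathrm{diag}(\vs')$, it commutes with left-multiplication by $\mQ$ but would fail to commute with an additive column-wise translation $\vt\mathbf{1}^\top$ unless $\vs'$ were the all-ones vector; column-wise scaling does not preserve translations. This is consistent with the design in Eq.~\ref{eq:mp2}, where the coordinates are first centered by subtracting the channel mean of $\mX_j$ before $T_S$ is applied, so that translation is handled separately and $T_S$ itself need only be O(3)-equivariant.
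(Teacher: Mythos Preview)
Your proof is correct and follows essentially the same approach as the paper: both expand $T_S(\mQ\mX,\vs)=(\mQ\mX)\,\mathrm{diag}(\vs')$ and use associativity to pull $\mQ$ out. Your version is more explicit about why $\vs'$ is independent of $\mX$ and adds a helpful remark on why only O(3)- rather than E(3)-equivariance holds, but the core argument is identical.
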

\begin{proof}
    $\forall \mQ \in \text{O}(3), \vt \in \R^3$, we have:
    \begin{align*}
        T_S\left(\mQ \mX, \vs\right) = (\mQ \mX) \mathrm{diag}(\vs') = \mQ (\mX \mathrm{diag}(\vs') = \mQ T_S\left(\mX, \vs\right) = \mQ \mX',
    \end{align*}
    which concludes Lemma~\ref{lem:ts}.
\end{proof}

\begin{lemma}
    \label{lem:ame}
    Denote the AME (\textsection~\ref{sec:dyenc}) as $\{\vh_i', \mX_i'\}_{i\in \gV_E\cup\gV_S\cup\gV_A} = \text{AME}\left(\{\vh_i, \mX_i\}_{i\in \gV_E\cup\gV_S\cup\gV_A}\right)$, then AME is independent $\text{E}(3)$-equivariant~\citep{ganea2021independent} with respect to $\gV_E\cup\gV_S$ and $\gV_A$. Namely, $\forall g_1, g_2 \in \text{E}(3)$, we have $\{\vh_i', g_1\cdot\mX_i'\}_{i\in \gV_E\cup\gV_S}\cup \{\vh_i', g_2\cdot\mX_i'\}_{i\in\gV_A} = \text{AME}\left(\{\vh_i, g_1\cdot\mX_i\}_{i\in \gV_E\cup\gV_S}\cup \{\vh_i, g_2\cdot\mX_i\}_{i\in\gV_A}\right)$.
\end{lemma}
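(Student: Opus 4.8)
The plan is to reduce the whole encoder to a single message-passing layer acting inside one coordinate-connected subsystem, and then exploit two structural facts: (i) the graphs $\gG_E\cup\gG_S$ and $\gG_A$ carry no coordinate edges between them, so every neighbourhood $\gN(i)$ stays inside one subsystem; and (ii) the only information that crosses between them is the hidden vector $\vh_i$ copied between $\gG_P$ and $\gG_S$, which involves no coordinates. Once a single layer is shown to be E$(3)$-invariant on $\vh_i$ and E$(3)$-equivariant on $\mX_i$ whenever all participating nodes share a common transformation, independence of $g_1$ and $g_2$ will follow because the two subsystems never mix coordinates and the cross-subsystem copies are coordinate-free, hence unaffected by either transformation.

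For the single-layer core, fix a subsystem and a common $g=(\mQ,\vt)$ with $\mQ\in\text{O}(3),\vt\in\R^3$ applied to every $\mX_i$ in it. First, the non-geometric message $\vm_{ij}$ in Eq.~\ref{eq:mp1} depends on coordinates only through $T_R(\mX_i,\mX_j)$; since neighbours lie in the same subsystem they receive the same $g$, so Lemma~\ref{lem:tr} gives $T_R(g\cdot\mX_i,g\cdot\mX_j)=T_R(\mX_i,\mX_j)$, and the Frobenius normalisation preserves this, making $\vm_{ij}$ invariant. Next, in Eq.~\ref{eq:mp2} the argument of $T_S$ is the centred coordinate $\mX_i-\tfrac{1}{c_j}\sum_k\mX_j(:,k)$; under $g$ every column of $\mX_i$ and the centroid of $\mX_j$ both pick up the same translation $\vt$, so the difference reduces to $\mQ(\mX_i-\tfrac{1}{c_j}\sum_k\mX_j(:,k))$ with the translation cancelled. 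Feeding this through the O$(3)$-equivariant $T_S$ (Lemma~\ref{lem:ts}) with the invariant scalar $\phi_x(\vm_{ij})$ yields $\mX_{ij}\mapsto\mQ\mX_{ij}$. Then Eq.~\ref{eq:mp3} makes $\vh_i^{(l+1)}$ invariant, since both of its arguments are invariant, while Eq.~\ref{eq:mp4} gives $\mX_i^{(l+1)}\mapsto\mQ\mX_i+\vt+\tfrac{1}{|\gN(i)|}\sum_j\mQ\mX_{ij}=g\cdot\mX_i^{(l+1)}$, establishing E$(3)$-equivariance of the layer.

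It remains to lift this to the full encoder. Because $\gG_E\cup\gG_S$ and $\gG_A$ share no coordinate edges, a layer run on $\gG_A$ uses $g_2$ throughout and a layer run on $\gG_E\cup\gG_S$ uses $g_1$ throughout, each inheriting the single-layer result with its own transformation. The alternating scheme copies only $\vh_i$ between $\gG_P$ and $\gG_S$; as these vectors are invariant under both $g_1$ and $g_2$, the copies are identical to those in the untransformed run and introduce no coordinate coupling. An induction over the $L$ alternating layers, together with the final broadcast layer on $\gG_A$, then propagates invariance of all $\vh_i$ and the independent equivariances $g_1$ on $\gV_E\cup\gV_S$ and $g_2$ on $\gV_A$, which is exactly the claim.

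I expect the main obstacle to be the translation bookkeeping in the geometric message of Eq.~\ref{eq:mp2}: $T_S$ is only O$(3)$-equivariant, not E$(3)$-equivariant, so the argument hinges on verifying that the centring by $j$'s centroid annihilates $\vt$ before $T_S$ is applied, and that the residual update in Eq.~\ref{eq:mp4} reintroduces exactly one copy of $\vt$ to restore full E$(3)$-equivariance. Care is also needed to confirm that the cross-subsystem hidden-state copies never leak a coordinate dependence, since this is precisely what keeps $g_1$ and $g_2$ genuinely independent.
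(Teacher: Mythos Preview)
Your proposal is correct and follows essentially the same route as the paper's proof: both reduce to showing that a single layer of Eqs.~\ref{eq:mp1}--\ref{eq:mp4} is E$(3)$-invariant on $\vh$ and E$(3)$-equivariant on $\mX$ within each subsystem (via Lemmas~\ref{lem:tr} and~\ref{lem:ts} plus the translation cancellation in the centred argument of $T_S$), and then observe that the only cross-subsystem communication is the copying of invariant hidden states between $\gG_P$ and $\gG_S$, so the two transformations $g_1,g_2$ never interact. Your treatment of the translation bookkeeping and the induction over alternating layers is, if anything, more explicit than the paper's.
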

\begin{proof}
    The key points to the proof of Lemma~\ref{lem:ame} are the following two statements: (1) The information exchange between $\gV_E\cup\gV_S$ and $\gV_A$ is $\text{E}(3)$-invariant; (2) The propagation process of Eq.~\ref{eq:mp1}-\ref{eq:mp4} is $\text{E}$(3)-invariant on $\vh$ and $\text{E}$(3)-equivariant on $\mX$. If both (1) and (2) are right, then each layer of the AME satisfies independent $\text{E}(3)$-equivariance with respect to $\gV_E\cup\gV_S$ and $\gV_A$, which easily leads to the correctness of Lemma~\ref{lem:ame}. Suppose (2) holds true, then the correctness of (1) is obvious because the information exchange between $\gV_E\cup\gV_S$ and $\gV_A$ is conducted by the sharing of hidden states and topology between $\gV_S$ and $\gV_P$, both of which are $\text{E}(3)$-invariant. Thus the focus narrows down to the proof of (2).

    $\forall g \in \text{E}(3)$, $g\cdot \mX\coloneqq\mQ \mX + \vt$, $\mQ \in \text{O}(3), \vt \in \R^3$, according to Lemma~\ref{lem:tr}, we have:
    \begin{align*}
        \vm_{ij} &= \phi_m(\vh_i^{(l)}, \vh_j^{(l)}, \frac{T_R(\mX_i^{(l)}, \mX_j^{(l)})}{||T_R(\mX_i^{(l)}, \mX_j^{(l)})||_F + \epsilon}), \\
                 &= \phi_m(\vh_i^{(l)}, \vh_j^{(l)}, \frac{T_R(g\cdot\mX_i^{(l)}, g\cdot\mX_j^{(l)})}{||T_R(g\cdot\mX_i^{(l)}, g\cdot\mX_j^{(l)})||_F + \epsilon}),
    \end{align*}
    which reads that the computation of $\vm_{ij}$ is $\text{E}(3)$-invariant. This directly leads to the $\text{E}(3)$-invariance of obtaining $\vh_i^{(l+1)}$ through Eq.~\ref{eq:mp3}. Next, according to Lemma~\ref{lem:ts}, we have:
    \begin{align*}
        \mQ \mX_{ij} &= \mQ T_S(\mX_{i}^{(l)} - \frac{1}{c_j} \sum_{k = 1}^{c_j} \mX_j^{(l)}(:, k), \phi_x(\vm_{ij})), \\
                     &= T_S(\mQ (\mX_{i}^{(l)} - \frac{1}{c_j} \sum_{k = 1}^{c_j} \mX_j^{(l)}(:, k)), \phi_x(\vm_{ij})), \\
                     &= T_S(\mQ \mX_{i}^{(l)} - \frac{1}{c_j} \sum_{k = 1}^{c_j} \mQ\mX_j^{(l)}(:, k), \phi_x(\vm_{ij})), \\
                     &= T_S(\mQ \mX_{i}^{(l)} + \vt - \frac{1}{c_j} \sum_{k = 1}^{c_j} (\mQ\mX_j^{(l)}(:, k) + \vt), \phi_x(\vm_{ij})), \\
                     &= T_S(g\cdot\mX_{i}^{(l)} - \frac{1}{c_j} \sum_{k = 1}^{c_j} g\cdot\mX_j^{(l)}(:, k), \phi_x(\vm_{ij})),
    \end{align*}
    which leads to the $\text{E}(3)$-equivariance of obtaining $\mX_i^{(l+1)}$ through Eq.~\ref{eq:mp4}:
    \begin{align*}
        g\cdot \mX_i^{(l+1)} &= g\cdot (\mX_i^{(l)} + \frac{1}{|\gN(i)|} \sum\nolimits_{j \in \gN(i)} \mX_{ij}), \\
                             &= \mQ (\mX_i^{(l)} + \frac{1}{|\gN(i)|} \sum\nolimits_{j \in \gN(i)} \mX_{ij})) + \vt, \\
                             &= \mQ \mX_i^{(l)} + \vt + \mQ \frac{1}{|\gN(i)|} \sum\nolimits_{j \in \gN(i)} \mX_{ij}, \\
                             &= g\cdot \mX_i^{(l)} + \frac{1}{|\gN(i)|} \sum\nolimits_{j \in \gN(i)} \mQ \mX_{ij}
    \end{align*}
    Therefore, the propagation process of Eq.~\ref{eq:mp1}-\ref{eq:mp4} is $\text{E}$(3)-invariant on $\vh$ and $\text{E}$(3)-equivariant on $\mX$, which concludes Lemma~\ref{lem:ame}.
\end{proof}

\begin{lemma}
    \label{lem:dock}
    Denote the docking procedure in Eq.~\ref{eq:dock1}-\ref{eq:dock2} as $\{\tilde{\mX}_i\}_{i\in\gV_A} = \mathrm{Dock}(\{\mX_i^{(T)}\}_{i\in\gV_A}, \{\mX_i^{(T)}\}_{i\in\gV_S})$, then it is E$(3)$-equivariant in terms of $\gV_S$. Namely, $\forall g_1, g_2 \in \text{E}(3)$, we have $\{g_1 \cdot \tilde{\mX}_i\}_{i\in\gV_A} = \mathrm{Dock}(\{g_2 \cdot \mX_i^{(T)}\}_{i\in\gV_A}, \{g_1 \cdot \mX_i^{(T)}\}_{i\in\gV_S})$.
\end{lemma}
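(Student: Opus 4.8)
The plan is to reduce the entire claim to a single transformation property of the Kabsch alignment and then substitute it into the coordinate update Eq.~\ref{eq:dock2}. Recall that Eq.~\ref{eq:dock1} invokes $\mathrm{Kabsch}(\{\mX_i^{(T)}\}_{i\in\gV_P}, \{\mX_i^{(T)}\}_{i\in\gV_S})$, returning the pair $(\mQ,\vt)\in\text{O}(3)\times\R^3$ that minimizes the least-squares alignment error $\sum_i \|\mQ\, p_i + \vt - s_i\|^2$, where $p_i$ ranges over the native paratope coordinates $\{\mX_i^{(T)}\}_{i\in\gV_P}$ and $s_i$ over the corresponding shadow paratope coordinates $\{\mX_i^{(T)}\}_{i\in\gV_S}$ (the one-to-one correspondence coming from the shared topology of \textsection~\ref{sec:shadow}). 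Since $\gV_P\subseteq\gV_A$, transforming the antibody by $g_2$ sends the source points to $g_2\cdot p_i = \mQ_2 p_i + \vt_2$, while transforming the shadow paratope by $g_1$ sends the targets to $g_1\cdot s_i = \mQ_1 s_i + \vt_1$, with $\mQ_1,\mQ_2\in\text{O}(3)$ and $\vt_1,\vt_2\in\R^3$.

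First I would establish the Kabsch transformation property: if $(\mQ,\vt)=\mathrm{Kabsch}(\{p_i\},\{s_i\})$, then $\mathrm{Kabsch}(\{g_2\cdot p_i\},\{g_1\cdot s_i\})$ returns $(\mQ',\vt')$ satisfying $\mQ'\mQ_2=\mQ_1\mQ$ and $\mQ'\vt_2+\vt'=\mQ_1\vt+\vt_1$. To see this, I consider the transformed objective $\sum_i \|\mQ'(\mQ_2 p_i+\vt_2)+\vt'-(\mQ_1 s_i+\vt_1)\|^2$ and reparametrize the optimization variables by $R=\mQ'\mQ_2$ and $T=\mQ'\vt_2+\vt'$. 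Because $\mQ_2$ is a fixed invertible orthogonal matrix, the map $(\mQ',\vt')\mapsto(R,T)$ is a bijection of $\text{O}(3)\times\R^3$ onto itself, so minimizing over $(\mQ',\vt')$ is equivalent to minimizing $\sum_i\|R\,p_i+T-\mQ_1 s_i-\vt_1\|^2$ over $(R,T)$. Substituting $R=\mQ_1\mQ$, $T=\mQ_1\vt+\vt_1$ and using $\|\mQ_1 v\|=\|v\|$ (norm-invariance of $\mQ_1\in\text{O}(3)$), this objective collapses to $\sum_i\|\mQ_1(\mQ p_i+\vt-s_i)\|^2=\sum_i\|\mQ p_i+\vt-s_i\|^2$, which is exactly the original minimum; the same norm-invariance shows each competitor $(R,T)$ attains the value of its $g_1$-preimage, so $(R,T)=(\mQ_1\mQ,\mQ_1\vt+\vt_1)$ is optimal. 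Translating back through the bijection yields the claimed identities for $\mQ'$ and $\mQ'\vt_2+\vt'$.

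Then I would substitute into Eq.~\ref{eq:dock2}. For every $i\in\gV_A$ the transformed docked coordinate is $\mQ'(g_2\cdot\mX_i^{(T)})+\vt'=\mQ'\mQ_2\mX_i^{(T)}+(\mQ'\vt_2+\vt')$. Using $\mQ'\mQ_2=\mQ_1\mQ$ and $\mQ'\vt_2+\vt'=\mQ_1\vt+\vt_1$ from the previous step, this equals $\mQ_1(\mQ\mX_i^{(T)}+\vt)+\vt_1=\mQ_1\tilde{\mX}_i+\vt_1=g_1\cdot\tilde{\mX}_i$, which is precisely the conclusion of Lemma~\ref{lem:dock}. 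The decisive feature is that $\mQ_2,\vt_2$ cancel entirely: the docked output tracks only the shadow paratope frame $g_1$ and is independent of the antibody's initial pose $g_2$, which is exactly the independent equivariance asserted by the lemma.

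The step I expect to be the main obstacle is the rigorous justification of the Kabsch transformation property, specifically verifying that $(\mQ',\vt')\mapsto(R,T)$ is genuinely a bijection onto the full feasible set $\text{O}(3)\times\R^3$ (this is where orthogonality, hence invertibility, of $\mQ_2$ is essential) and that norm-invariance under $\mQ_1$ renders the transformed and original optimization problems identical up to this reparametrization. A secondary caveat worth flagging is uniqueness of the Kabsch minimizer: the argument presumes $(\mQ,\vt)$ is well-defined, which holds for non-degenerate (e.g.\ non-collinear) paratope configurations; under degeneracy one would instead phrase equivariance at the level of the optimal value or the set of minimizers.
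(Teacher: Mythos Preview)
Your proof is correct and follows essentially the same approach as the paper: both establish that the Kabsch output transforms as $g' = g_1\cdot g\cdot g_2^{-1}$ under independent E(3) actions on the source and target point sets, then apply this to the antibody coordinates. The paper states this composition law with an intuitive one-line justification, whereas you derive the equivalent identities $\mQ'\mQ_2=\mQ_1\mQ$ and $\mQ'\vt_2+\vt'=\mQ_1\vt+\vt_1$ more explicitly via a reparametrization of the least-squares objective; your version is more rigorous, and your remark on uniqueness under degenerate configurations is a detail the paper omits.
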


\begin{proof}
    Suppose $g\cdot X \coloneqq \mQ \mX + \vt$, where $\mQ, \vt = \mathrm{Kabsch}(\{\mX_i^{(T)}\}_{i\in\gV_P}, \{\mX_i^{(T)}\}_{i\in\gV_S})$. When we apply $g_1$ to $\gV_S$ and $g_2$ to $\gV_A$ ($\gV_P \in \gV_A$), the new Kabsch process can be interpreted as first exerting $g_1^{-1}$ on $\gV_S$ and $g_2^{-1}$ on $\gV_A$ to eliminate the transformations, then implementing the above-mentioned $g$ on $\gV_A$, and finally transforming $\gV_A$ with $g_1$ to recover the transformation on $\gV_S$. Therefore, for $g'\cdot X \coloneqq \mQ' \mX + \vt'$, where $\mQ', \vt' = \mathrm{Kabsch}(\{g_2 \cdot \mX_i^{(T)}\}_{i\in\gV_P}, \{g_1 \cdot \mX_i^{(T)}\}_{i\in\gV_S})$, we have $g' = g_1 \cdot g\cdot g_2^{-1}$. Then it is easy to derive:
    \begin{align*}
        \{g' \cdot g_2 \cdot \mX_i^{(T)}\}_{i\in\gV_A}
        &= \mathrm{Dock}(\{g_2 \cdot \mX_i^{(T)}\}_{i\in\gV_A}, \{g_1 \cdot \mX_i^{(T)}\}_{i\in\gV_S}) \\
        &= \{g_1 \cdot g\cdot g_2^{-1} \cdot g_2 \cdot \mX_i^{(T)}\}_{i\in\gV_A} = \{g_1 \cdot g \cdot \mX_i^{(T)}\}_{i\in\gV_A} \\
        &= \{g_1 \cdot \tilde{\mX}_i\}_{i\in\gV_A}, 
    \end{align*}
    which concludes Lemma~\ref{lem:dock}.
\end{proof}

Ultimately we are ready to give the proof to Theorem~\ref{the:e3} as follows:
\begin{proof}
    $\forall g_1, g_2 \in \text{E}(3)$, according to Lemma~\ref{lem:ame}, each iteration in dyMEAN satisfies independent $\text{E}(3)$-equivariance with respect to $\gV_E\cup\gV_S$ and $\gV_A$. Thus the transformed inputs $\{\vh_i,g_1\cdot\mX_i\}_{i\in\gV_E\cup\gV_S}, \{\vh_i^{(0)},g_2\cdot\mX_i^{(0)}\}_{i\in\gV_A}$ lead to transformed encoded results with independent $\text{E}(3)$-equivariance $\{\vh_i^{(T)},g_1\cdot\mX_i^{(T)}\}_{i\in\gV_S}, \{\vh_i^{(T)},g_2\cdot\mX_i^{(T)}\}_{i\in\gV_A}$. Next, based on Lemma~\ref{lem:dock}, the docking procedure is E$(3)$-equivariant in terms of $\gV_S$, thus we have $\{g_1 \cdot \tilde{\mX}_i\}_{i\in\gV_A} = \mathrm{Dock}(\{g_2 \cdot \mX_i^{(T)}\}_{i\in\gV_A}, \{g_1 \cdot \mX_i^{(T)}\}_{i\in\gV_S})$.
    Plus that $\vp_i$ is obtained through Softmax on $\vh_i^{(T)}$, we can derive:
    \begin{align*}
        \{\vp_i\}_{i\in\gV_P}, \{g_1\cdot\tilde{\mX_i}\}_{i\in\gV_A}=\text{dyMEAN}\left(\{\vh_i,g_1\cdot\mX_i\}_{i\in\gV_E\cup\gV_S}, \{\vh_i^{(0)},g_2\cdot\mX_i^{(0)}\}_{i\in\gV_A}\right),
    \end{align*}
    which concludes Theorem~\ref{the:e3}. Also, the initialization of the shadow paratope conforms to the standard Gaussian distribution at the center of the epitope, which is isotropic~\citep{xu2022geodiff} and thus does not interfere with the $\text{E}(3)$-equivariance.
\end{proof}

\section{Adaption for Property Optimization}
\label{app:prop_opt}
We can also adjust our method for optimizing the properties (e.g. affinity) of existing antibodies. (1) \textbf{Initialization with disturbance}. Since the structure of the existing antigen-antibody complex is known, we only need to disturb it with $\tZ\sim \gN(0, \tI)$ for initialization rather than the method in \textsection~\ref{sec:init}. (2) \textbf{No shadow paratope}. The initialization also provides the relative position of the antibody to the epitope, therefore we can directly capture the interface geometry through interacting edges between epitope and antibody without the shadow paratope in \textsection~\ref{sec:shadow}. A pair of residues is interacting if their distance is below a threshold $\delta$ (i.e. 6.6 \mbox{\normalfont\AA} according to \citet{ramaraj2012antigen}), therefore the interacting edge set is $\gE_\gI = \{(u, v)|d(u, v) \leq \delta, u \in \gV_E, v \in \gV_A\}$. The alternating message passing in the encoder AME can be merged into a single step on the entire complex $(\gV_E \cup \gV_A, \gE_E \cup \gE_A \cup \gE_\gI)$.
(3) \textbf{Partially masked sequence}. A common requirement of property optimization is to change the sequence as little as possible~\citep{ren2022proximal}. To achieve flexibility in controlling the extent of modification, we only mask a random subset of the CDR residues $\sI^\prime \subseteq \sI$ for each training step. Then we can set the upper bound of the number of modified residues by masking a fixed number of residues for generation.

\paragraph{Gradient Search} Given a property scorer on complexes $f: \sG \rightarrow \R$, we search for a favorable initialization that leads to a complex with the better property. we first utilize our trained model $\theta$ to generate a dataset $\gD = \{(\vh_{\gG_i}, f(\gG_i))|\gG_i \sim \theta\}$, where $\vh_{\gG_i}$ denotes the representation of the complex $\gG_i$ obtained by averaging the hidden states of all nodes. The dataset $\gD$ is then used to fit a predictor of $f$ on the representation space of complexes: $\hat{f}_\theta: \sH \rightarrow f[\sG]$. Now given an existing complex $\gG$ with CDRs partially masked, the mapping from the initial disturbance to the predicted score $\hat{f}_\theta \circ g_\theta (\gG | \tZ)$ is differentiable, where $g_\theta$ sums up the process of generating a complex and then obtaining its representation. Suppose we want to maximize the property score, we can conduct gradient search on the initialization space by minimizing the target function:
\begin{align}
    \gL(\tZ) = -\hat{f}_\theta \circ g_\theta (\gG | \tZ) + \KL (\tZ \Vert \gN(0, \tI)),
\end{align}
where the KL divergence~\citep{kullback1951information} restricts the disturbance to the standard Gaussian distribution.

\section{Huber Loss}
\label{app:huber}
The Huber loss~\citep{huber1992robust} for robust supervision of the coordinates and bond lengths is defined as follows: 
\begin{align}
    l(x, y) = \left\{ \begin{array}{lr}
    0.5 (x-y)^2, if |x-y| < \delta, \\
    \delta \cdot (|x-y| - 0.5 \cdot \delta), else
\end{array} \right.
\end{align}
When the deviation between $x$ and $y$ is below the threshold $\delta$, $l$ is equivalent to MSE loss, and when the deviation is above the threshold, $l$ is equivalent to L1 loss. MSE loss provides smoothness near 0 but is sensitive to outliers, while the opposite holds true for L1 loss. By selecting a suitable loss according to the deviation, Huber loss combines the merits of MSE loss and L1 loss, thus exhibiting better numerical stability. We set $\delta = 1$ in our experiments following~\citet{kong2022conditional}.

\section{Experiment Details}
\label{app:exp}
For the baselines, we adopt the hyperparameters and training procedure in their official releases since all the papers utilize SAbDab to form training sets of similar scale and distribution. We list the values of these hyperparameters as well as those of our dyMEAN in Table~\ref{tab:hyperparam}.

We train dyMEAN by Adam optimizer with the data-parallel framework of PyTorch on 2 GeForce RTX 2080 Ti GPUs. We set the initial $lr = 1 \times 10^{-3}$ and decay the learning rate exponentially to reach $1\times 10^{-4}$ at the last step. The batch size is 16, which is consistent across different tasks. It takes 200 epochs for dyMEAN to converge in the tasks of epitope-binding CDR-H3 design and affinity optimization, while the number is 250 in the task of complex structure prediction. We notice that the learning of 1D sequences is faster than that of 3D structures, leading to overfitting the 1D sequences in CDR-H3 design. To bypass the problem, we unmask some paratope residues at the initial stage of training, and gradually transit to the ultimate setting where all the paratope residues are masked. Specifically, the ratio of unmasked paratope residues is initialized as $90\%$ and anneals to $0\%$ with a cosine schedule.

\begin{table}[htbp]
\centering
\caption{Hyperparameters for the baselines and our dyMEAN.}
\label{tab:hyperparam}
\begin{tabular}{ccl}
\toprule
hyperparameter       & value                & description                                                                        \\ \hline
\rowcolor{black!5}\multicolumn{3}{c}{HERN}                                                                                                         \\ \hline
hidden\_size         & 256                  & Size of the hidden states in its hierachical message passing network (MPN).        \\
num\_rbf             & 16                   & Number of RBF kernels for distance embedding.                                      \\
n\_layers            & 4                    & Number of layers in the MPN.                                                       \\
k\_neighbors         & 9                    & Number of neighbors for each node in the KNN graph.                                \\ \hline
\rowcolor{black!5}\multicolumn{3}{c}{Diffab}                                                                                                       \\ \hline
hidden\_size         & 128                  & Size of the hidden states in the MPN.                                              \\
pair\_size           & 64                   & Size of the residue-pair features.                                                 \\
n\_layers            & 6                    & Number of layers in the MPN.                                                       \\
n\_steps             & 100                  & Number of the diffusion steps.                                                     \\ \hline
\rowcolor{black!5}\multicolumn{3}{c}{MEAN}                                                                                                         \\ \hline
embed\_size          & 64                   & Size of the residue type embedding.                                                \\
hidden\_size         & 128                  & Size of the hidden states in the MPN                                               \\
n\_layers            & 3                    & Number of layers in the MPN                                                        \\
n\_iter              & 3                    & Number of iterations in its progressive full-shot decoding.                        \\
k\_neighbors         & 9                    & Number of neighbors for each node in the KNN graph.                                \\ \hline
\rowcolor{black!5}\multicolumn{3}{c}{dyMEAN (ours)}                                                                                                \\ \hline
embed\_size          & 64                   & Size of the residue type embedding and the position number embedding.              \\
hidden\_size         & 128                  & Size of the hidden states in the MPN                                               \\
n\_layers            & 3                    & Number of layers in the MPN                                                        \\
n\_iter              & 3                    & Number of iterations in the progressive full-shot decoding.                        \\
k\_neighbors         & 9                    & Number of neighbors for each node in the KNN graph.                                \\
$d$                  & 16                   & Size of the attribute vector of each channel (equal to the size of the atom type\\
\multicolumn{1}{l}{} & \multicolumn{1}{l}{} & embedding and the atom position embedding).                                                  \\ \bottomrule
\end{tabular}
\end{table}

Furthermore, we provide the definition of the epitope to HDock when using it for docking. To further enhance its docking performance, we generate 100 docked samples for each antibody and calculate the 48 residues closest to each antibody. We compare these residues with the given epitope and select the candidate with the top-1 coverage as the final result.

\section{Space Complexity Analysis}
\label{app:complexity}
We emphasize the spatial efficiency of our dyMEAN compared to HERN~\citep{jin2022antibody}, which also models the side chains in addition to backbones but is limited to the paratope (\emph{i.e.}, CDR-H3), by two aspects: initialization and encoding. We denote the number of residues in the epitope, the paratope, and the antibody by $N_E, N_P$, and $N_A$.

\paragraph{Initialization}
It is obvious that our structural initialization (\textsection~\ref{sec:init}) has a complexity linear to the number of residues in the shadow paratope and the antibody, which is $O(N_P + N_A)$. HERN initializes the coordinates of the antibody via eigenvalue decomposition of the residue-level pair-wise distance matrix of the complex, thus having a complexity of $O((N_E + N_P)^2)$. Since $N_A$ is much larger than $N_P$, the quadratic complexity largely impedes HERN's scaling from modeling paratope only to modeling the entire antibody (\emph{i.e.}, replacing $N_P$ with $N_A$).

\paragraph{Encoding}
The space complexity of GNN-based encoders is dominated by the scheme of edge-wise message passing. We denote the maximum number of neighbors of each node by $K$, and the maximum number of atoms in a single residue (\emph{i.e.}, maximum channel size) by $C$. For our AME (\textsection~\ref{sec:dyenc}), the major influence is the geometric relation extractor (Eq.~\ref{eq:tr}) with a complexity of $O(K(N_E + N_P + N_A)(2dC + 2C + C^2)) = O(K(N_E + N_P + N_A)C(2d + 2 + C))$, where $d$ is the dimension of the attribute vector. HERN adopts a hierarchical encoder, which first implements EGNN~\citep{satorras2021n} on the residue-level graph and the atom-level graph sequentially, then updates the coordinates with inter-$C_\alpha$ terms and intra-residue terms. Since the atom-level graph is much larger than its residue-level counterpart, the dominant part is the atom-level message passing with a complexity of $O(K(N_E + N_P)CH)$, where $H$ denotes the hidden size. Scaling HERN from $N_P$ to $N_A$, we have $N_E + N_P + N_A \approx N_E + N_A$ but $H >> 2d + 2 + C$ because $d$ is set to a small number and $C=14$ in the dataset, which reveals the superiority of AME in efficiency over the hierarchical encoder.

Our attempts to scale HERN to the entire antibody bring no success and exhibit unrealistic GPU requirements, which we attribute to the high complexity of its initialization, hierarchical encoder, and autoregressive refinement~\citep{kong2022conditional}.

\section{Side-Chain Dihedral Angles}
\label{app:xangles}
To analyze whether our model generates realistic dihedral angles in the side chains, we draw the distribution of $\chi_1, \chi_2, \chi_3, \chi_4$ with the generated structures and the reference structures\footnote{We use the following definitions of the dihedral angles: \url{http://www.mlb.co.jp/linux/science/garlic/doc/commands/dihedrals.html}}.
We display the overall distribution in figure~\ref{fig:chis_all} and separated distribution of different amino acids in figure~\ref{fig:chis_separate}. Both figure~\ref{fig:chis_all} and figure~\ref{fig:chis_separate} show that generally the generated dihedral angles conforms to the reference distribution. However, we also identify from the fine-grained figure~\ref{fig:chis_separate} that the generated distributions are smoother than the reference ones, indicating possible minor deviations on the angles under certain circumstances. Therefore, in some practical applications, relaxing methods like OpenMM~\citep{eastman2017openmm} are still needed for post-process.
\begin{figure}[htbp]
    \vskip -0.1in
    \centering
    \includegraphics[width=\textwidth]{./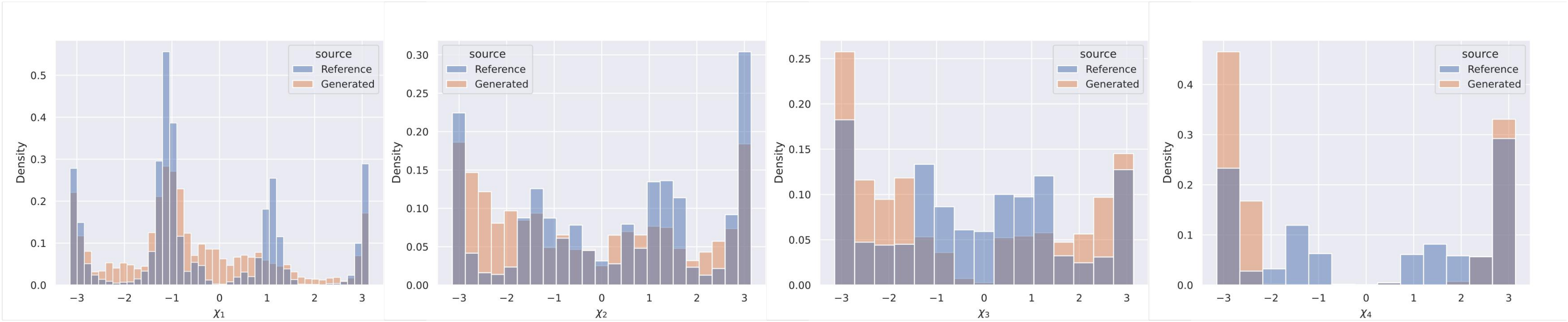}
    \vskip -0.1in
    \caption{The overall distributions of 4 dihedral angles in the side chains.}
    \label{fig:chis_all}
\end{figure}

\begin{figure}[htbp]
    \vskip -0.2in
    \centering
    \includegraphics[width=.9\textwidth]{./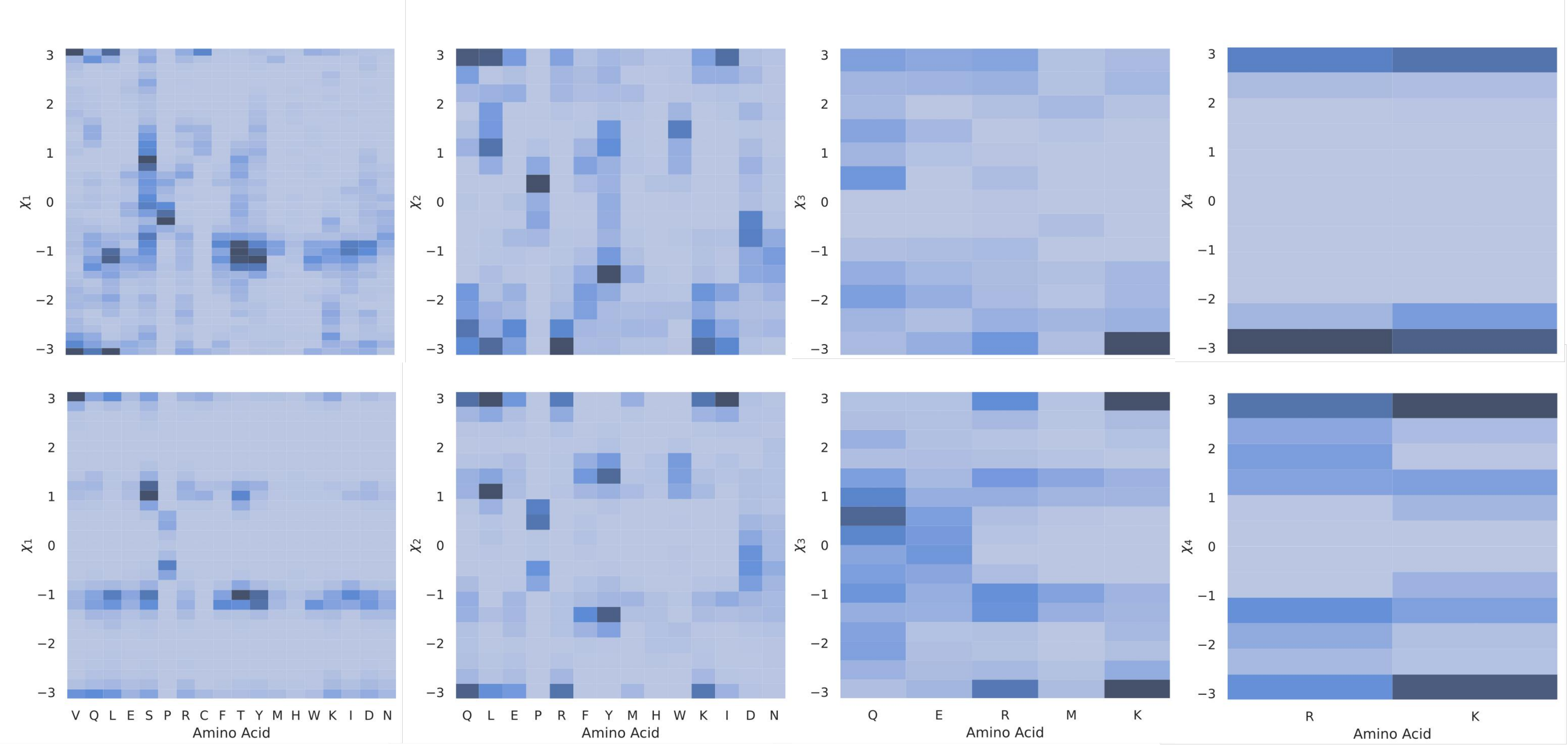}
    \vskip -0.1in
    \caption{The distribution of 4 dihedral angles in the side chains categorized by amino acids. The first row and the second row display the distributions from the generated structures and the reference structures, respectively.}
    \label{fig:chis_separate}
\end{figure}

\section{Threshold for Defining Conserved Residues}
\label{app:conserve_th}
We analyze the influence of the threshold for defining the conserved residues (\textsection~\ref{sec:init}) by displaying the variations in the number of conserved residues and the average RMSD of the antibodies in the dataset to the conserved template. According to Figure~\ref{fig:conserve_th}, with the decrease of the threshold, the number of conserved residues gradually increases and becomes stable at 90\% threshold, which is expected. In contrast, the RMSD curve shows a surge at 92\% threshold. Hence, to balance these two factors, we think it is better to set the threshold between 93\% and 96\%, where the RMSD remains low and the number of conserved residues is not too small. 

\begin{figure}[htbp]
    \centering
    \includegraphics[width=.7\textwidth]{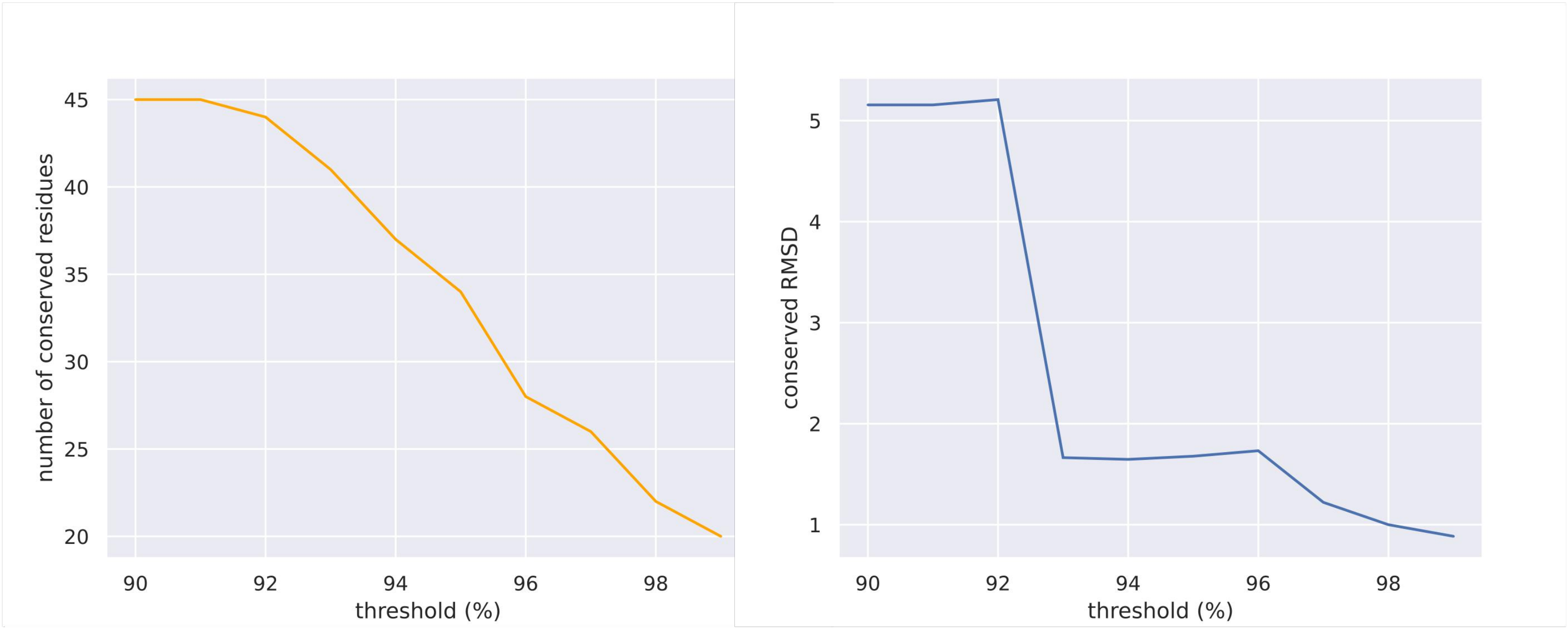}
    \caption{Number of conserved residues (left) and average RMSD of the antibodies in the dataset to the conserved template (right) with respect to different thresholds for defining the conserved residues.}
    \label{fig:conserve_th}
\end{figure}

We further conducted experiments with templates from 90\% and 99\% threshold, and show the results in Table~\ref{tab:conserve_th}. It is observed that our model is robust with the templates, but still, using 95\% as threshold generally achieves more favorable results compared to the other choices.

\begin{table}[htbp]
\centering
\caption{Performance of dyMEAN with conserved templates defined by different thresholds.}
\label{tab:conserve_th}
\scalebox{0.9}{
\begin{tabular}{cccccccc}
\toprule
\multirow{2}{*}{Threshold} & \multicolumn{3}{c}{Generation}                       &           & \multicolumn{3}{c}{Docking}                       \\ \cline{2-4} \cline{6-8} 
                       & AAR$\uparrow$    & TMscore$\uparrow$& LDDT$\uparrow$&           & CAAR$\uparrow$& RMSD$\downarrow$& DockQ$\uparrow$\\ \hline
\rowcolor{black!5}\multicolumn{8}{c}{CDR-H3 Design}                                                                                                             \\ \hline
90\%                   & 40.99\%          & 0.9722          & 0.8365          &           & 26.04\%          & \textbf{\hphantom{0}7.88} & \textbf{0.415} \\
95\%                   & \textbf{43.65\%} & 0.9726          & \textbf{0.8454} &           & 28.11\%          & \hphantom{0}8.11          & 0.409          \\
99\%                   & 43.32\%          & \textbf{0.9730} & 0.8438          &           & 27.59\%          & \hphantom{0}9.30          & 0.405          \\
\hline
\rowcolor{black!5}\multicolumn{8}{c}{Complex Structure Prediction}                                                                                              \\ \hline
90\%                   & -                & 0.9703          & 0.8597          &           & -                & \hphantom{0}9.34           & \textbf{0.459} \\
95\%                   & -                & \textbf{0.9731} & \textbf{0.8673} &           & -                & \textbf{\hphantom{0}9.05}  & 0.452          \\
99\%                   & -                & 0.9712          & 0.8601          &           & -                & \hphantom{0}9.70           & 0.443          \\
\bottomrule
\end{tabular}}
\end{table}

\section{Selection of the Shadow Paratope in Docking}
\label{app:mulcdr_dock}
\begin{table}[htbp]
    \centering
    \caption{Results on complex structure prediction using both CDR-H3 and CDR-L3 as the shadow paratope.}
    \label{tab:mulcdr_dock}
    \begin{tabular}{lcccc}
        \toprule
        Shadow Paratope & TMscore$\uparrow$ & lDDT$\uparrow$    & RMSD$\downarrow$  & DockQ$\uparrow$ \\\midrule
        H3              & 0.9731            & 0.8673            & \hphantom{0}9.05  & 0.452           \\
        H3 + L3         & 0.9585            & 0.8248            & 10.80             & 0.397           \\\bottomrule
    \end{tabular}
\end{table}

In this work, we generally use CDR-H3 since the interacting residues mainly come from it~\citep{akbar2021compact}. In practice, we find that using CDR-H3 only is generally sufficient for docking. Also, it is easy to extend the shadow paratope to contain other CDRs by our implementation. We conduct an experiment using both CDR-H3 and CDR-L3 as the shadow paratope for the docking experiment and the results are in Table~\ref{tab:mulcdr_dock}. Additional inclusion of CDR-L3 leads to a slight performance drop in all metrics. This is possibly because the two CDRs are separated spatially in the structure, and identifying their relative positions brings up additional complexity to the problem.

\section{Affinity with FoldX}
\label{app:aff_foldx}
We provide the evaluation on affinity optimization (\textsection~\ref{sec:affopt}) with FoldX~\citep{schymkowitz2005foldx} as the criterion in Table~\ref{tab:aff_opt_foldx}. We first use OpenMM~\citep{eastman2017openmm} to relax the generated structures, then use FoldX to minimize the energy. Finally, we use FoldX to calculate the interface energy ($\Delta G$) of both the wild type and the mutant to obtain the $\Delta\Delta G$.
\begin{table}[htbp]
\centering
\vskip -0.1in
\caption{Average $\Delta \Delta G$ (kcal/mol) and average number of changed residues ($\Delta L$). dyMEAN-$n$ denotes the restricted version allowing at most $n$ changed residues, and dyMEAN itself changes $n$ residues with $n$ is sampled from $[1,N]$ at each generation.}
\label{tab:aff_opt_foldx}
\setlength\tabcolsep{3pt}
\scalebox{0.8}{\begin{tabular}{ccc}
\toprule
Method       & $\Delta \Delta G\downarrow$& $\Delta L\downarrow$ \\ \midrule
Diffab ($t=1$)\hphantom{0}  & -0.72             & 1.08       \\
Diffab ($t=2$)\hphantom{0}  & -0.77             & 1.17       \\
Diffab ($t=4$)\hphantom{0}  & -0.42             & 1.10       \\
Diffab ($t=8$)\hphantom{0}  & \hphantom{-}0.07  & 1.42       \\
Diffab ($t=16$)             & \hphantom{-}0.74  & 2.67       \\
Diffab ($t=32$)             & \hphantom{-}1.77  & 6.40       \\ \bottomrule
\end{tabular}}
\hspace{1em}
\scalebox{0.8}{\begin{tabular}{ccc}
\toprule
Method       & $\Delta \Delta G\downarrow$ & $\Delta L\downarrow$ \\ \midrule
MEAN         &   -5.84           &   5.09     \\
dyMEAN-$1$     & -7.95             & 0.94       \\
dyMEAN-$2$     & \textbf{-8.36}    & 1.27       \\
dyMEAN-$4$     & -8.33             & 2.08       \\
dyMEAN-$8$     & -7.89             & 4.37       \\ 
dyMEAN\hphantom{-0} & -8.10             & 2.76       \\ \bottomrule
\end{tabular}}
\vskip -0.15in
\end{table}

\section{Trial on General Proteins}
\label{app:cath}
We have further conducted experimental validation on the CATH dataset. We select the complexes from the CATH dataset and divide each of them into a receptor and a ligand. We identify the interacting residues of the ligands and mask them for generation, with the rest of the ligand as the "framework". Two types of settings are applied for validation. In the "e2e" setting, neither the structure nor the docking position of the framework is provided, resembling the setting in this paper. In the "inpainting" setting, both the structure and the docking position of the framework are provided, thus the model only focuses on filling in the interacting residues and adjusting the framework structure according to the docking position, imitating the setting in previous works (e.g. DiffAb, MEAN). The total number of valid data is 6883. In the e2e setting, we use ESMFold~\citep{lin2023evolutionary} to provide the initial structures in place of the method in \textsection~\ref{sec:init}. Results in Table~\ref{tab:cath} illustrate that dyMEAN achieves promising performance in inpainting the protein complexes. Nevertheless, the performance on the e2e setting reveals greater challenges in designing binding interfaces without prior knowledge on the binding positions. Augmenting data or upgrading models are still urgent needs in this domain.

\begin{table}[htbp]
    \centering
    \caption{Evaluation on general proteins in the CATH dataset.}
    \label{tab:cath}
    \begin{tabular}{lccccc}
        \toprule
        Task        & AAR$\uparrow$ & TMscore$\uparrow$ & lDDT$\uparrow$    & RMSD$\downarrow$  & DockQ$\uparrow$ \\\midrule
        e2e         & 16.02\%       & 0.7678            & 0.6579            & 11.28             & 0.188 \\
        inpainting  & 51.79\%       & 0.9708            & 0.8481            & \hphantom{0}0.66  & 0.916 \\\bottomrule
    \end{tabular}
\end{table}

\section{Samples}
\label{app:sample}
We provide more samples of generated antibodies from epitope-binding CDR-H3 design (\textsection~\ref{sec:cdrh3}) in Figure~\ref{fig:sample}.
\begin{figure}[htbp]
    \centering
    \includegraphics[width=.85\textwidth]{./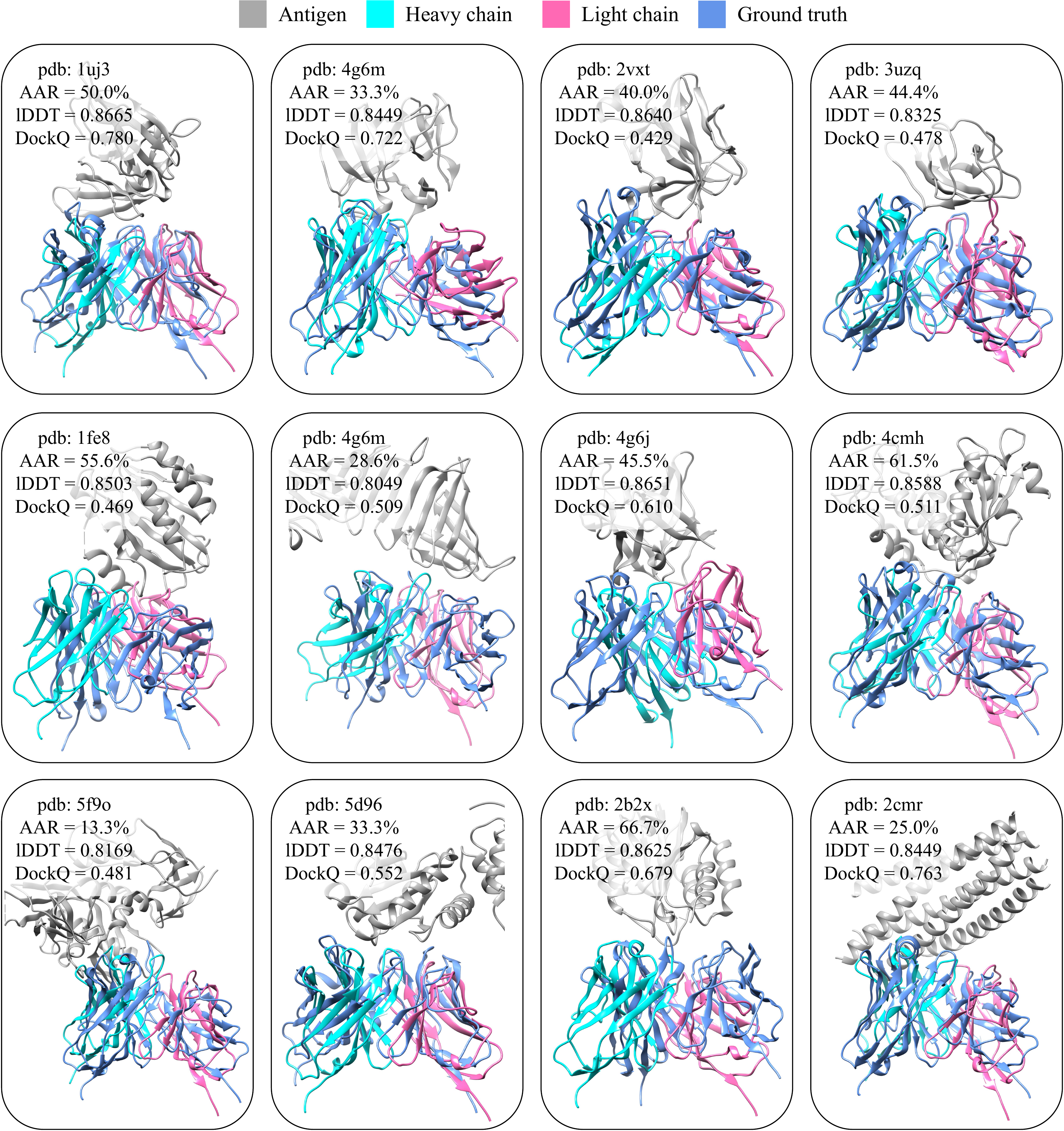}
    \caption{Samples of generated antibodies.}
    \label{fig:sample}
\end{figure}

%

\end{document}